\begin{document}

\title{Minimum-link $C$-Oriented Paths Visiting a Sequence of Regions in the Plane}
\titlerunning{Minimum-link $C$-Oriented Paths}
%
\author{Kerem Geva\inst{1} \and
Matthew J. Katz\inst{1} \and
Joseph S. B. Mitchell\inst{2} \and
Eli Packer\inst{3}}

\authorrunning{K. Geva et al.}
%
\institute{Ben Gurion University of the Negev, Beer Sheva, Israel 
\email{\{gevak,matya\}@bgu.ac.il}\\
\and
Stony Brook University, Stony Brook, NY, USA
\email{joseph.mitchell@stonybrook.edu}\\
\and
Track160
\email{eli.p@track160.com}}
\maketitle              

\newtheorem{my_claim}{Claim}


\def\reals{{\mathbb R}}
\def\sphere{{\mathbb S}}
\def\eps{{\varepsilon}}
\def\bd{{\partial}}
\def\ttheta{{\bf \theta}}
\def\dist{\varrho}
\def\polylog{\mathop{\mathrm{polylog}}}
\def\mypara#1{\smallskip\noindent\textbf{\textit{#1}}}

\def\A{{\cal A}}
\def\D{{\cal D}}
\def\F{{\cal F}}
\def\G{{\cal G}}
\def\K{{\cal K}}
\def\L{{\cal L}}
\def\N{{\cal N}}
\def\R{{\cal R}}
\def\S{{\cal S}}
\def\T{{\cal T}}
\def\H{{\cal H}}
\def\N{{\cal N}}
\def\W{{\cal W}}

\def\etal{\textit{et~al.}}

\def\marrow{\marginpar[\hfill$\longrightarrow$]{$\longleftarrow$}}
\def\kerem#1{\textsc{(Kerem says: }\marrow\textsf{#1})}
\def\matya#1{\textsc{(Matya says: }\marrow\textsf{#1})}
\def\joe#1{\textsc{(Joe says: }\marrow\textsf{#1})}
\def\eli#1{\textsc{(Eli says: }\marrow\textsf{#1})}


\begin{abstract}
Let $E=\{e_1,\ldots,e_n\}$ be a set of $C$-oriented disjoint segments in the plane, where $C$ is a given finite set of orientations that spans the plane, and let $s$ and $t$ be two points. 
We seek a minimum-link $C$-oriented tour of $E$, that is, a polygonal path $\pi$ from $s$ to $t$ that visits the segments of $E$ in order, such that, the orientations of its edges are in $C$ and their number is minimum.
We present an algorithm for computing such a tour in $O(|C|^2 \cdot n^2)$ time. This problem already captures most of the difficulties occurring in the study of the more general problem, in which $E$ is a set of not-necessarily-disjoint $C$-oriented polygons.

\end{abstract}
\section{Introduction}
\label{sec:intro}

We consider the problem in which we are given a sequence of regions, ${\cal R}=(R_1,R_2,\ldots,R_n)$, where each $R_i$ is a subset of an underlying geometric domain, and our goal is to compute a tour (a path or a cycle) within the domain that visits the regions $\cal R$ in the given order and is optimal in some prescribed sense. Optimality might be based on the Euclidean length of the tour, the number of turns in a polygonal tour (or, equivalently, the number of {\em links} (edges) in the tour), a weighted cost function, etc. There are also variants of the problem in which it is important to specify exactly what constraints there are on the ordered visitation of the regions, particularly if the regions are not disjoint. The problem arises naturally and is also motivated by applications in curve simplification (e.g., \cite{DBLP:journals/ijcga/GuibasHMS93}), vehicle routing (e.g., the traveling salesperson problem (TSP); see \cite{crc-2016}), search and exploration (e.g., \cite{DBLP:conf/stoc/DrorELM03}), computing structures on imprecise points \cite{Loffler11}, task sequencing in robotics (see \cite{alatartsev2015robotic,alatartsev2013optimizing}), etc.

In this paper we focus on the version of the problem in which the regions $R_i$ are disjoint $C$-oriented line segments (with orientations/slopes from a finite set $C$) in the plane, the tour is required to be polygonal and $C$-oriented, and the optimality criterion is to minimize the number of links (equivalently, the number of turns, or vertices in the polygonal tour). We briefly mention generalizations (deferred to the full paper), including the case in which the regions $R_i$ are more general than disjoint line segments.

More formally, let $C$ be a finite set of orientations, which can be thought of as points on a unit circle centered at the origin. We assume that (i) $C$ spans the plane, i.e., for any two points $p,q$ in the plane, there exists a two-link (directed) path from $p$ to $q$ (or a one-link path), such that the orientation of the edges in the path belong to $C$, and (ii) for any orientation $c_i \in C$, the orientation $\overline{c_i}$ is also in $C$, where $\overline{c_i}$ is the opposite orientation of $c_i$.
The requirement for paths to be $C$-oriented arises in some settings (mechanical constraints) but also has advantages in lower/upper bounding of the turn angles, in comparison with polygonal paths having general links, which may form arbitrarily sharp turns.

We focus on the following problem: \emph{Minimum-link $C$-oriented tour of a sequence of $C$-oriented segments}:
Let $E = \{e_1,\ldots,e_n\}$ be a set of $C$-oriented disjoint segments, that is, if we think of $e \in E$ as a directed segment, by arbitrarily picking one of the two possible directions, then $e$'s orientation belongs to $C$. Let $s$ and $t$ be two points that do not belong to any of the segments in $E$. A \emph{tour} of $E$ is a polygonal path $\pi$ that begins at $s$ and ends at $t$ with the following property: There exists a sequence of points $p_1,\ldots,p_n$ on $\pi$, such that, $p_i$ precedes $p_{i+1}$, for $1 \le i \le n-1$, and $p_i \in e_i$, for $1 \le i \le n$. A tour is $C$-oriented, if the orientation of each of its edges belongs to $C$. 
We wish to compute a $C$-oriented minimum-link tour of $E$, that is, a $C$-oriented tour consisting of a minimum number of links (i.e., edges). 
    
Our main contribution is an efficient algorithm to compute a minimum-link $C$-oriented tour of a sequence of $n$ disjoint $C$-oriented line segments, in time $O(|C|^2 \cdot n^2)$. (The algorithm becomes $O(n)$ in the special case of $|C|=4$, e.g., axis-oriented paths.)

\subsection*{Related Work}
In the \emph{touring polygons problem} (TPP), one seeks a tour that is shortest in Euclidean length that visits a sequence of polygons; such a tour is found in polynomial time if the polygons are convex and is NP-hard in general (and has an FPTAS)~\cite{DBLP:conf/stoc/DrorELM03}. 
Minimization of the link length of a tour visiting a sequence of (possibly overlapping) disks is studied in \cite{DBLP:journals/ijcga/GuibasHMS93}, where the motivation for this ``ordered stabbing'' problem was curve and map simplification (see also \cite{neyer1999line}). In contrast with our problem specification, in \cite{DBLP:journals/ijcga/GuibasHMS93} the path edges are allowed to be of arbitrary orientation, not required to be $C$-oriented. This assumption leads to particular efficiency, as one can use an extension of linear-time line stabbing methods (see Egyed and Wenger~\cite{egyed1991ordered}) to execute a greedy algorithm efficiently. 
Computing a minimum-link $C$-oriented path from start to goal among obstacles has been studied as well, without requiring visitation of a sequence of regions; see \cite{mitchell2014minimum,speckmann2018homotopic}. 

\section{Preliminaries}
\label{sec:overview}

\noindent
{\bf Notation.}
For any $1 \leq i \leq n$, let 
$l(e_i)$ be the number of links in a minimum-link path that begins at $s$ and ends at a point on $e_i$. We only consider $C$-oriented paths to $e_i$ that visit the segments $e_1,\ldots,e_i$, as defined above. We refer to the number of links in such a path as its \emph{length}. We distinguish between paths to $e_i$ both by their length and by the orientation of their last link. Let $I(e_i,c_j)$ ($I^+(e_i,c_j)$) be the set of maximal intervals on $e_i$ formed by all paths of length $l(e_i)$ ($l(e_i)+1$) from $s$ to $e_i$, whose last link has orientation $c_j$. We set $I(e_i) = \bigcup_{c\in C} I(e_i,c)$ and $I^+(e_i) = \bigcup_{c\in C} I^+(e_i,c)$.

For an orientation $c_j \in C$, let $c_{j+1}$ and $c_{j-1}$ be the orientations in $C$ that immediately succeed $c_j$ and precede $c_j$ in clockwise order, respectively. We denote by $\phi(c_j,c_k)$ the set of orientations in $C$ between $c_j$ and $c_k$ (in clockwise order from $c_j$), not including $c_j$ and $c_k$.   
Finally, we denote the ray emanating from $p$ in orientation $c_j$ by $Ray(p,c_j)$ and the line through $p$ parallel to a segment of orientation $c_j$ by $Line(p,c_j)$.

Let $a$ be an interval on $e_i$ that belongs to one of the sets $I(e_i)$ or $I^+(e_i)$. Then $a$ has a length $l_a$ (which is either $l(e_i)$ or $l(e_i)+1$) and an orientation $c_a \in C$ associated with it. We denote the endpoints of $a$ by $a_1$ and $a_2$, where $a_1$ is to the left of $a_2$, when approaching $a$ through a path corresponding to $a$ (i.e., a path starting at $s$ and ending at a point in $a$, which is of length $l_a$ and whose last link is of orientation $c_a$). Next, we use $a$ to define two regions of the plane, namely, $PT(a)$ and $\psi(a,c_j)$.

Let $PT(a)$ denote the semi-slab consisting of all points that can be reached by extending the last link of a path corresponding to $a$. We refer to such a path as a path that \emph{passes through $a$} and continues in the same orientation at which it reached $a$ (i.e., $c_a$). Thus, the region $PT(a)$ is the semi-slab bounded by the rays $Ray(a_1,c_a), Ray(a_2,c_a)$ and the interval $a$ (see, e.g., the red region in Figure~\ref{fig:segments_intersect}). 
Similarly, let $\psi(a,c_j)$ be the region of all points that can be reached by a path that passes through $a$ and then, not necessarily immediately,
turns and continues in orientation $c_j$. Thus, $\psi(a,c_j) = \bigcup_{q\in PT(a)} Ray(q,c_j)$, for example if $c_j = \overline{c_a}$, then $\psi(a,c_j)$ is the slab defined by the lines $Line(a_1,c_a)$ and $Line(a_2,c_a)$ (for additional examples see Figure~\ref{fig:delta_region}). 

Finally, for an interval $b \in I^+(e_i)$, we set $\delta(b) = \{a \in I(e_i) | a \subseteq b \}$.

We now show that the sets $I(e_i)$ and $I^+(e_i)$ are sufficient, in the sense that there exists a minimum-link tour of $E$ whose portion from $s$ to $e_i$ corresponds to an interval in $I(e_i) \cup I^+(e_i)$. Assume this is false, and let $\pi$ be a minimum-link tour of $E$, such that its portion $\pi_i$ from $s$ to $e_i$ does not correspond to an interval in $I(e_i) \cup I^+(e_i)$. Then, the length of $\pi_i$ (denoted $|\pi_i|$) is at least $l(e_i)+2$. Let $p$ be the point on $e_i$ where $\pi_i$ ends, and denote the portion of $\pi$ from $p$ to $t$ by $\pi^i$. Then $|\pi| \ge l(e_i)+2+|\pi^i|$, if $\pi$ makes a turn at $p$, or $|\pi| = l(e_i)+2+|\pi^i|-1$, otherwise.  Consider any path $\pi'_i$ from $s$ to $e_i$ that corresponds to an interval in $I(e_i)$ and let $p'$ be the point on $e_i$ where $\pi'_i$ ends. Then, the tour obtained by $\pi'_i$, the edge $p'p$ and $\pi^i$ is a tour of $E$ of length at most $l(e_i)+1+|\pi^i| \le |\pi|$.
We have thus shown that

\begin{my_claim}
\label{claim:claim1}
There exists a minimum-link tour of $E$ whose portion from $s$ to $e_i$ corresponds to an interval in $I(e_i) \cup I^+(e_i)$, for $1 \le i \le n$.
\end{my_claim}

Finally, since our assumptions on the set of orientations $C$ imply that there exists a two-link path from $p$ to $q$, for any pair of points $p,q$ in the plane, we have
\begin{my_claim}
\label{claim:claim2}
$l(e_{i-1}) \le l(e_i) \le l(e_{i-1}) +2$, for $1 \le i \le n$ (where $l(e_0)=0$).
\end{my_claim}

\section{The Main Algorithm}
\label{sec:algorithm}

In this section, we present an algorithm for computing a minimum-link tour of $E$. The algorithm consists of two stages. In the first stage, it considers the segments of $E$, one at a time, beginning with $e_1$, and, at the current segment $e_i$, it computes the sets $I(e_i)$ and $I^+(e_i)$ from the sets $I(e_{i-1})$ and $I^+(e_{i-1})$, associated with the previous segment. In the second stage, it constructs a minimum-link tour of $E$, beginning from its last link, by consulting the sets $I(\cdot)$ and $I^+(\cdot)$ computed in the first stage.

We begin with several definitions that will assist us in the description of the algorithm.
Given a set $I$ of intervals on $e_i$, where each interval $a \in I$ is associated with some fixed length (link distance) $l_a=l$ and an orientation $c_a$, and $c_j \in C$, we define the sets of intervals \emph{+0-intervals, +1-intervals, +2-intervals} on $e_{i+1}$ with respect to $I$ and $c_j$ (the definition of the first set does not depend on $c_j$).

The {\bf +0-intervals} on $e_{i+1}$ consist of the intervals on $e_{i+1}$ formed by passing through the intervals of $I$, without making any turns. It is constructed by computing the interval $b = PT(a) \cap e_{i+1}$, for each $a \in I$, and including it in the set, setting $l_b = l$ and $c_b = c_a$, if it is not empty.

The {\bf +1-intervals} on $e_{i+1}$ associated with orientation $c_j$ consist of the intervals on $e_{i+1}$ formed by passing through the intervals of $I$ and then making a turn in orientation $c_j$. It is constructed by computing the interval $b = \psi(a,c_j) \cap e_{i+1}$, for each $a \in I$, and including it in the set, setting $l_b = l+1$ and $c_b = c_j$.

The {\bf +2-intervals} on $e_{i+1}$ associated with orientation $c_j$ consist of the intervals on $e_{i+1}$ formed by passing through the intervals of $I$ and then making two turns, where the first is in any orientation $c \neq \overline{c_a}$ and the second is in orientation $c_j$; see Lemma~\ref{claim:lemma4.1}.

We construct it as follows.
First, we check if there is an interval $a \in I$ such that $c_a \notin \{c_{j-1},c_j,c_{j+1}\}$. If there is such an interval, we include the interval $b = e_{i+1}$, setting $l_b = l+2$ and $c_b = c_j$, and stop; see Lemma~\ref{claim:lemma4.2}.
Otherwise, for each $a\in I$, we include the intervals $b^+ = \psi(a,\overline{c_a}_{+1}) \cap e_{i+1}$ and $b^- = \psi(a,\overline{c_a}_{-1}) \cap e_{i+1}$, provided that they are not empty, and set $l_{b^+} = l_{b^-} = l+2$ and $c_{b^+} = c_{b^-} = c_j$; see paragraph following Lemma~\ref{claim:lemma4.2}.

\subsection{Stage I}
We are now ready to describe the first stage of the algorithm. It is convenient to treat the points $s$ and $t$ as segments $e_0$ and $e_{n+1}$, respectively. We set $l(e_0)=0$ and, for each $c_j \in C$,  we insert the interval $a=e_0$, after setting $l_a=0$ and $c_a = c_j$, into $I(e_0,c_j)$. Similarly, for each $c_j \in C$, we insert the interval $a=e_0$, after setting $l_a=1$ and $c_a = c_j$, into $I^+(e_0,c_j)$.

We iterate over the segments $e_1,\ldots,e_{n+1}$, where in the $i$'th iteration, $1 \le i \le n+1$, we compute $l(e_i)$ and the pair of sets $I(e_i)$ and $I^+(e_i)$. Assume we have already processed the segments $e_0,\ldots,e_i$, for some $0 \le i \le n$. We describe the next iteration, in which we compute $l(e_{i+1})$ and the sets $I(e_{i+1})$ and $I^+(e_{i+1})$.

For each $c_j \in C$, we compute the +0-intervals on $e_{i+1}$ with respect to $I(e_i,c_j)$ and store them in $I(e_{i+1},c_j)$. If at least one of the sets $I(e_{i+1},c_j)$ is non-empty, we set $l(e_{i+1}) = l(e_i)$ (otherwise $l(e_{i+1}) > l(e_i)$). 
Next, for each $c_j \in C$, we compute the +0-intervals on $e_{i+1}$ with respect to $I^+(e_i,c_j)$ and the +1-intervals on $e_{i+1}$ with respect to $I(e_i) \setminus I(e_i,c_j)$ (and $c_j$). We store these intervals (if exist) either in $I^+(e_{i+1},c_j)$, if $l(e_{i+1})=l(e_i)$, or in $I(e_{i+1},c_j)$, if $l(e_{i+1}) > l(e_i)$. If we performed the latter option, then we set $l(e_{i+1}) = l(e_i)+1$. Finally, if we performed one of the two options, then we repeatedly merge overlapping intervals in the set (either $I^+(e_{i+1},c_j)$ or $I(e_{i+1},c_j)$), until there are no such intervals.

If $l(e_{i+1}) > l(e_i)$, then, for each $c_j \in C$, we compute the +2-intervals on $e_{i+1}$ with respect to $I(e_i)$ and the +1-intervals on $e_{i+1}$ with respect $I^+(e_i) \setminus I^+(e_i,c_j)$. We store these intervals (if exist) either in $I^+(e_{i+1},c_j)$, if $l(e_{i+1})=l(e_i)+1$, or in $I(e_{i+1},c_j)$, otherwise (i.e., we still have not fixed $l(e_{i+1})$). If we performed the latter option, then we set $l(e_{i+1}) = l(e_i)+2$, and, as above, if we performed one of the two options, then we repeatedly merge overlapping intervals in the set, until there are no such intervals. 

Finally, if $l(e_{i+1})=l(e_i)+2$, then, for each $c_j \in C$, we set $I^+(e_{i+1},c_j) = e_{i+1}$; see Claim~\ref{claim:claim5}.

\subsection{Stage II}
In this stage we use the information collected in the first stage to construct a minimum-link tour $\pi$ of $E$.

We construct $\pi$ incrementally beginning at $t$ and ending at $s$. That is, in the first iteration we add the portion of $\pi$ from $t$ to $e_n$, in the second iteration we add the portion from $e_n$ to $e_{n-1}$, etc. 
Assume that we have already constructed the portion of $\pi$ from $t$ to $e_i$, where this portion ends at point $p$ of interval $a$ on $e_i$. 
We describe in Algorithm~\ref{alg:Recovery} (see Appendix~\ref{app:stage2})
how to compute the portion from $e_i$ to $e_{i-1}$, which begins at the point $p$ of interval $a$ and ends at a point $p'$ of interval $b$ on $e_{i-1}$ (where $b \in I(e_{i-1}) \cup I^+(e_{i-1})$) and consists of $l_a-l_b+1$ links. 
Before continuing to the next iteration, we set $p=p'$ and $a=b$.

After adding the last portion, which ends at $s$, we remove all the redundant vertices from $\pi$, i.e., vertices at which $\pi$ does not make a turn.

\section{Analysis}

In this section, we prove the correctness of our two-stage algorithm and bound its running time, via a sequence of lemmas and claims.

\label{sec:analysis}
\begin{lemma}
\label{claim:lemma3.1}
For any interval $a \in I(e_i)$ and for any $c_j \in C \setminus \{c_a\}$, there exists an interval $b \in I^+(e_i,c_j)$ such that $a \subseteq b$, for $1\leq i \leq n$. 
\end{lemma}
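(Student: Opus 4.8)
The plan is to prove the containment pointwise: I will show that every point $p \in a$ is the endpoint of a $C$-oriented path of length exactly $l(e_i)+1$ that visits $e_1,\ldots,e_i$ in order and whose last link has orientation $c_j$. Since $a$ is connected and such a path exists for each of its points, and since $I^+(e_i,c_j)$ consists of the \emph{maximal} such intervals, all of $a$ must lie inside a single interval $b\in I^+(e_i,c_j)$, which is exactly the assertion $a\subseteq b$. Fix such a point $p$ and let $\pi_p$ be a path of length $l(e_i)$ that reaches $p$ while corresponding to $a$; its last link $u p$ has orientation $c_a$. The whole construction will modify only this last link, leaving the prefix of $\pi_p$ from $s$ to $u$ (which already visits $e_1,\ldots,e_{i-1}$ in order) untouched, so that the visitation order and the disjointness of the segments are automatically preserved, and $p\in e_i$ remains the last stabbing point.

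The construction splits into two cases according to $c_j$. When $c_j=\overline{c_a}$, I replace $u p$ by a two-link ``overshoot-and-return'' detour: continue from $u$ along orientation $c_a$ to a point $q$ strictly beyond $p$ (so the segment $u q$ still passes through $p$), and then traverse $q p$ in orientation $\overline{c_a}=c_j$. This adds exactly one link, leaves the final orientation equal to $c_j$, and works verbatim for every $p\in a$. When $c_j\neq \overline{c_a}$ (and $c_j\neq c_a$ by hypothesis), I instead reroute the last link through an intermediate vertex: I seek an orientation $c\in C$ so that the point $p''=Ray(u,c)\cap Line(p,c_j)$ lies on the side of $p$ from which orientation $c_j$ enters $p$, and replace $u p$ by the two links $u p''$ (orientation $c$) and $p'' p$ (orientation $c_j$). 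Again this costs exactly one extra link and ends in orientation $c_j$.

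The main obstacle is this second case: I must guarantee that a suitable intermediate orientation $c\in C$ exists, i.e.\ that some ray $Ray(u,c)$ meets the half-line of $Line(p,c_j)$ lying on the incoming side of $p$. This is where assumptions (i) and (ii) on $C$ are essential. The directions from $u$ toward that half-line sweep out an open cone bounded by the orientation of $u p$ (namely $c_a$) and by $\overline{c_j}$; because $C$ spans the plane and is closed under taking opposites, I will argue that this cone must contain an orientation of $C$ (equivalently, that the two-link $C$-path from $u$ to $p$ guaranteed by the spanning property can be chosen with its last link in orientation $c_j$). Once the existence of $c$ is established, it remains only to check the bookkeeping: the rerouted path has length exactly $l(e_i)+1$ by construction, so $p$ indeed lies in an interval of $I^+(e_i,c_j)$; and since the reroute is confined to the final link, the stabbing order of $e_1,\ldots,e_i$ is maintained for every $p\in a$, yielding the common enclosing interval $b$.
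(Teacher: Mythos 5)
Your overall skeleton (argue pointwise that every $p\in a$ admits a path of length $l(e_i)+1$ ending at $p$ with last-link orientation $c_j$, then deduce $a\subseteq b$ from the fact that the intervals of $I^+(e_i,c_j)$ are merged and hence non-overlapping) is exactly the paper's skeleton. But your realization of the extra link as a \emph{nondegenerate} geometric segment has a genuine gap in the case $c_j\notin\{c_a,\overline{c_a}\}$: the claimed intermediate orientation $c$ need not exist. Your cone argument asserts that, since $C$ spans the plane and is closed under opposites, $C$ must contain an orientation strictly inside the open cone at $u$ bounded by $c_a$ and $\overline{c_j}$. This is false: take $C$ to be the four axis-parallel orientations (which spans the plane and is closed under opposites --- indeed this is the paper's highlighted $|C|=4$ special case), let $c_a$ be east and $c_j$ be north, with $p$ due east of $u$. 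The incoming half-line of $Line(p,c_j)$ is the open vertical ray strictly south of $p$, and no ray $Ray(u,c)$ with $c\in C$ meets it. Your parenthetical fallback fails for the same reason: the spanning property guarantees \emph{some} two-link $C$-path from $u$ to $p$, not one whose last link has a prescribed orientation $c_j$; in the example above, every nondegenerate two-link axis-parallel path from $u$ to $p$ ends horizontally.

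The missing idea is that the paper's path model permits a turn ``without extending the path,'' i.e., appending a zero-length final link. The paper's entire proof is: take the length-$l(e_i)$ path ending at $p$ with orientation $c_a$, and turn at $p$ to orientation $c_j$ without moving; this yields a path of length $l(e_i)+1$ ending at the same point $p$ whose last link has orientation $c_j$, for \emph{every} $c_j\neq c_a$ simultaneously, with no case analysis and no geometric rerouting. This convention is not an ad hoc trick --- it is built into the paper's definitions (the initialization inserts $e_0$ into $I(e_0,c_j)$ for all $c_j\in C$; the region $\psi(a,c_j)=\bigcup_{q\in PT(a)}Ray(q,c_j)$ contains $a$ itself since each ray contains its apex) and is reused verbatim in the proofs of Lemma~\ref{claim:lemma4.2} and Claim~\ref{claim:claim5}; the final cleanup step of Stage~II removes non-turn vertices so the accounting remains consistent. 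Your overshoot-and-return construction for $c_j=\overline{c_a}$ is correct but unnecessary once degenerate links are allowed; without them, the lemma as you attempt to prove it would actually be false for small orientation sets such as the rectilinear one, so no repair of the cone argument can succeed.
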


\begin{proof}
Let $p\in a$, then there is a path $\pi_i$ of length $l(e_i)$ that begins at $s$, ends at $p$, and whose last link is of orientation $c_a$.
By making a turn at $p$ in orientation $c_j$ (without extending $\pi_i$), we obtain a path $\pi'_i$ of length $l(e_i)+1$, whose last link is of orientation $c_j$.
Therefore, there is an interval $b\in I^+(e_i,c_j)$ such that $p\in b$, and since (by construction) there are no overlapping intervals in $I^+(e_i,c_j)$, we conclude that $a \subseteq b$.
\end{proof}

\begin{lemma}
\label{lem:lemma3.2}
For any $1\leq i \leq n-1$ and $c_j\in C$, if there is an interval $a\in I(e_i,\overline{c_j}) \cup I^+(e_i,\overline{c_j})$ such that $PT(a)\cap e_{i+1}\neq \emptyset$, then, for any interval $b\in I(e_i,c_j) \cup I^+(e_i,c_j)$, we have that $PT(b)\cap e_{i+1} = \emptyset$.
\end{lemma}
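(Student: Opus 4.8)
The plan is to argue by contradiction, using only the disjointness of the consecutive segments $e_i$ and $e_{i+1}$ together with the geometry of $PT(\cdot)$. First I would fix coordinates so that $c_j$ points along a unit vector $\vec u$, and write every point $w$ as the pair $(w_\parallel, w_\perp) = (\langle w, \vec u\rangle, \langle w, \vec u^{\perp}\rangle)$. Unwinding the definition of $PT$, since $c_a = \overline{c_j}$ has direction $-\vec u$ we have $PT(a) = \{\, p - t\vec u : p \in a,\ t \ge 0 \,\}$, whereas $PT(b) = \{\, q + t\vec u : q \in b,\ t \ge 0 \,\}$ because $c_b = c_j$. Assuming toward a contradiction that both $PT(a)\cap e_{i+1}$ and $PT(b)\cap e_{i+1}$ are nonempty, I would fix witnesses $r \in PT(a)\cap e_{i+1}$ and $r' \in PT(b)\cap e_{i+1}$ and points $p\in a\subseteq e_i$, $q\in b\subseteq e_i$ realizing them. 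Then $r = p - t\vec u$ and $r' = q + s\vec u$ with $t,s\ge 0$, which records exactly the two relations $r_\perp = p_\perp$ with $r_\parallel \le p_\parallel$, and $r'_\perp = q_\perp$ with $r'_\parallel \ge q_\parallel$.

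Intuitively, $e_{i+1}$ now touches the slab of $e_i$ both from the $-\vec u$ side (at the height $p_\perp$, where its point $r$ has $\parallel$-coordinate at most that of $e_i$) and from the $+\vec u$ side (at the height $q_\perp$, where its point $r'$ has $\parallel$-coordinate at least that of $e_i$), so a disjoint straight $e_{i+1}$ cannot avoid crossing $e_i$. To make this precise in the generic case, where neither segment is parallel to $\vec u$, I would view each segment as the graph of an affine function of the transversal coordinate: let $F_{e_i}(\xi)$ and $F_{e_{i+1}}(\xi)$ denote the $\parallel$-coordinate of $e_i$ and of $e_{i+1}$ at transversal coordinate $\xi$, both defined on the overlap of the two transversal projections. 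The heights $p_\perp$ and $q_\perp$ lie in this overlap (they are attained on both segments, by $p,r$ and by $q,r'$ respectively), and the affine function $h = F_{e_i} - F_{e_{i+1}}$ satisfies $h(p_\perp) = p_\parallel - r_\parallel \ge 0$ and $h(q_\perp) = q_\parallel - r'_\parallel \le 0$. By the intermediate value theorem $h$ vanishes at some transversal coordinate between them, producing a point common to $e_i$ and $e_{i+1}$ and contradicting their disjointness.

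The hard part, I expect, is the degenerate case in which $e_i$ or $e_{i+1}$ is parallel to $\vec u$, where the graph-over-$\xi$ description fails and the intermediate value argument does not apply verbatim. Here I would reason directly. If $e_i$ is parallel to $\vec u$, then $p_\perp = q_\perp$, hence $r_\perp = r'_\perp$ as well, so $r$ and $r'$ lie at a single transversal level; either $r=r'$, in which case $r_\parallel$ lies between $q_\parallel$ and $p_\parallel$ and the relations force $r$ onto the sub-segment of $e_i$ joining $p$ and $q$, giving $r \in e_i\cap e_{i+1}$, or $r\neq r'$, in which case $e_{i+1}$ carries two points at the same transversal level and is therefore collinear with $e_i$, and ordering these disjoint collinear segments along $\vec u$ contradicts one of $r_\parallel\le p_\parallel$, $r'_\parallel\ge q_\parallel$. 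The case where $e_{i+1}$ is parallel to $\vec u$ is handled symmetrically, interchanging the roles of the pairs $(p,r)$ and $(q,r')$. In every branch we exhibit a common point of $e_i$ and $e_{i+1}$ or a numerical contradiction, so the assumption is untenable and the lemma follows; the core sign-change step is short, and only these parallel-to-$c_j$ degeneracies require separate care.
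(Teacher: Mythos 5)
Your proof is correct and takes the same route as the paper: the paper's entire proof is the one-sentence observation (supported by a figure) that if $e_{i+1}$ met both $PT(a)$ and $PT(b)$, which emanate from $e_i$ in the opposite directions $\pm c_j$, then $e_{i+1}$ would have to cross $e_i$, contradicting disjointness. Your coordinate/intermediate-value argument, including the degenerate cases where $e_i$ or $e_{i+1}$ is parallel to $c_j$, is a rigorous fleshing-out of exactly the claim the paper delegates to its figure.
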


\begin{proof}
If there exist intervals $a \in I(e_i,\overline{c_j}) \cup I^+(e_i,\overline{c_j})$ and $b \in I(e_i,c_j) \cup I^+(e_i,c_j)$, such that $e_{i+1}$ intersects both $PT(a)$ and $PT(b)$, then $e_{i+1}$ must intersect $e_i$ (see Figure~\ref{fig:segments_intersect}) --- contradiction. 
\end{proof}

The following claim bounds the number of intervals with associated length and orientation $l(e_i)+1$ and $c_j$, respectively, that are `created' on $e_{i+1}$.

\begin{my_claim}
\label{claim:claim3}
At most $\max\left\{ |I(e_i,\overline{c_j})|,|I^+(e_i,c_j)| \right\}  + 2$ intervals with associated length and orientation $l(e_i)+1$ and $c_j$, respectively, are `created' on $e_{i+1}$, during the execution of the algorithm.
\end{my_claim}

\begin{proof}
There are two ways to reach a point on $e_{i+1}$ with a path of length $l(e_i)+1$ whose last link is of orientation $c_j$. The first is by passing through one of the intervals in $I(e_i) \setminus I(e_i,c_j)$ and then making a turn in orientation $c_j$.
The second is by passing through one of the intervals in $I^+(e_i,c_j)$, without making any turn.
That is, the intervals on $e_{i+1}$ with associated length $l(e_i)+1$ and associated orientation $c_j$ are determined by the intervals in $I^+(e_i,c_j) \cup (I(e_i)\setminus I(e_i,c_j))$. 

Consider an interval $b \in I^+(e_i,c_j)$ (e.g., the blue interval in Figure~\ref{fig:first_step}), and let $c\in C$ be the orientation of $e_i$ when directed from $b_1$ to $b_2$. We divide $\delta(b)\setminus I(e_i,\overline{c_j})$ into four subsets as follows:
$A=\{a\in \delta(b) \mid c_a\in \phi(c_j,c) \cup \{c\}\}$,
$B=\{a\in \delta(b) \mid c_a\in \phi(c,\overline{c_j})\}$,
$C=\{a\in \delta(b) \mid c_a\in \phi(\overline{c_j}, \overline{c})\}$, and
$D=\{a\in \delta(b)\mid c_a\in \phi(\overline{c}, c_j)\cup \{\overline{c}\}\}$.
We denote by $R_{b \cup A}$ the region of all points that can be reached by a path that passes through $b$, or passes through $a\in A$ and then makes a turn in orientation $c_j$ (i.e., $R_{b \cup A}=PT(b) \cup \bigcup_{a \in A} \psi(a,c_j)$).
We compute the boundary of $R_{b \cup A}$ from $PT(b)$, by adding the regions $\psi(a,c_j)$, one at a time, for each interval $a \in A$. 

Let $\psi(a,c_j)$, for some $a \in A$, be the region that is added in the first step (see the red interval in Figure~\ref{fig:first_step}).
Since $(a_1,a_2)\subseteq (b_1,b_2)$ and $c_a \in \phi(c_j,c) \cup \{c\}$, $Ray(a_2,c_a)$ and $Ray(b_2,c_j)$ intersect at a point $p_a$ (see Figure~\ref{fig:first_step}).
By passing through $a$ and then turning before reaching $Ray(b_2,c_j)$ (i.e., at one of the points belonging to $PT(a) \cap PT(b)$), we cannot reach any point that is not already in $PT(b)$. However, by turning after crossing $Ray(b_2,c_j)$, we can reach points that are in the area bounded by $Ray(p_a,c_j)$ and $Ray(p_a,c_a)$ (the shaded area in Figure~\ref{fig:first_step}).
Thus, the region $R_{b \cup A}$ at the end of the first step, is bounded by $Ray(b_1,c_j)$, $(b_1,b_2)$, $(b_2,p_a)$ and $Ray(p_a,c_a)$, as can be seen in Figure~\ref{fig:first_step}.
\begin{figure}[!ht]
\hspace{-4em}%
    \begin{minipage}{.6\textwidth}
    \centering
    \vspace{10pt}
    \includegraphics[scale=0.5]{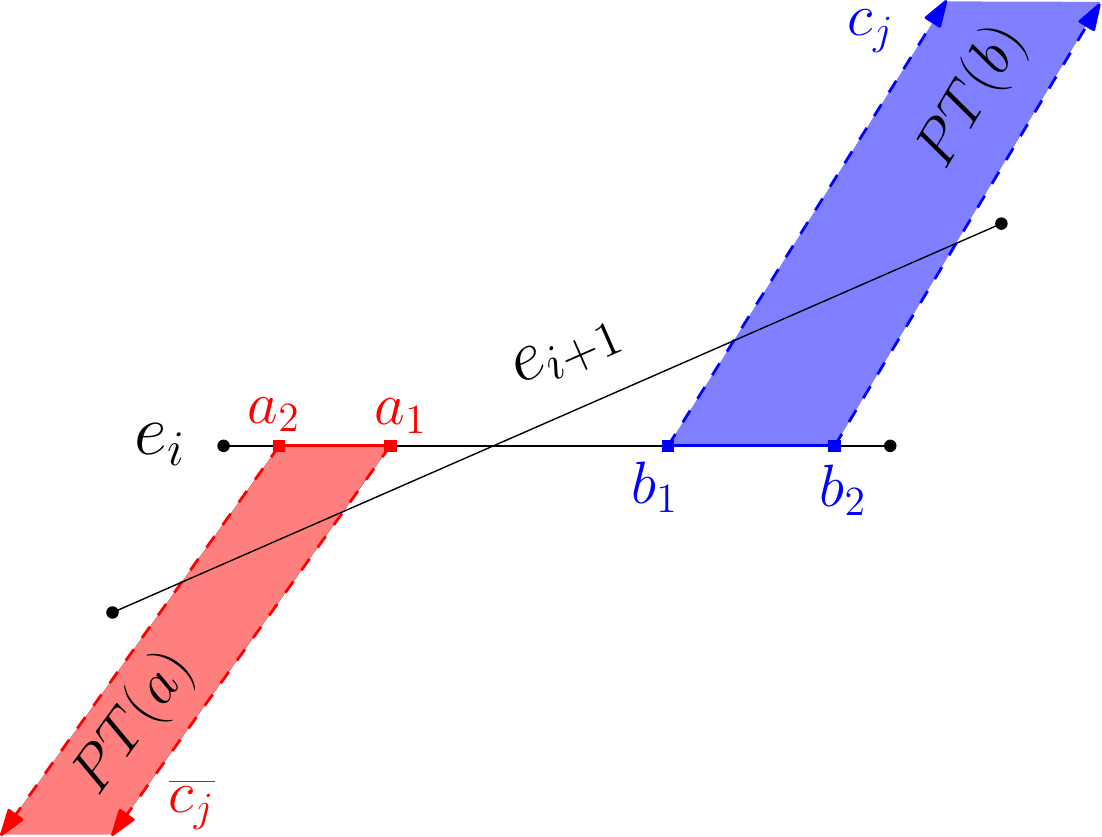}
    \caption{
    If $e_{i+1}$ intersects both $PT(a)$ (red) and $PT(b)$ (blue), then $e_{i+1}$ must intersect $e_i$.}
    \label{fig:segments_intersect}
    \end{minipage}%
     \hspace{5em}%
    \begin{minipage}{.4\textwidth}
    \centering
    \includegraphics[scale=0.5]{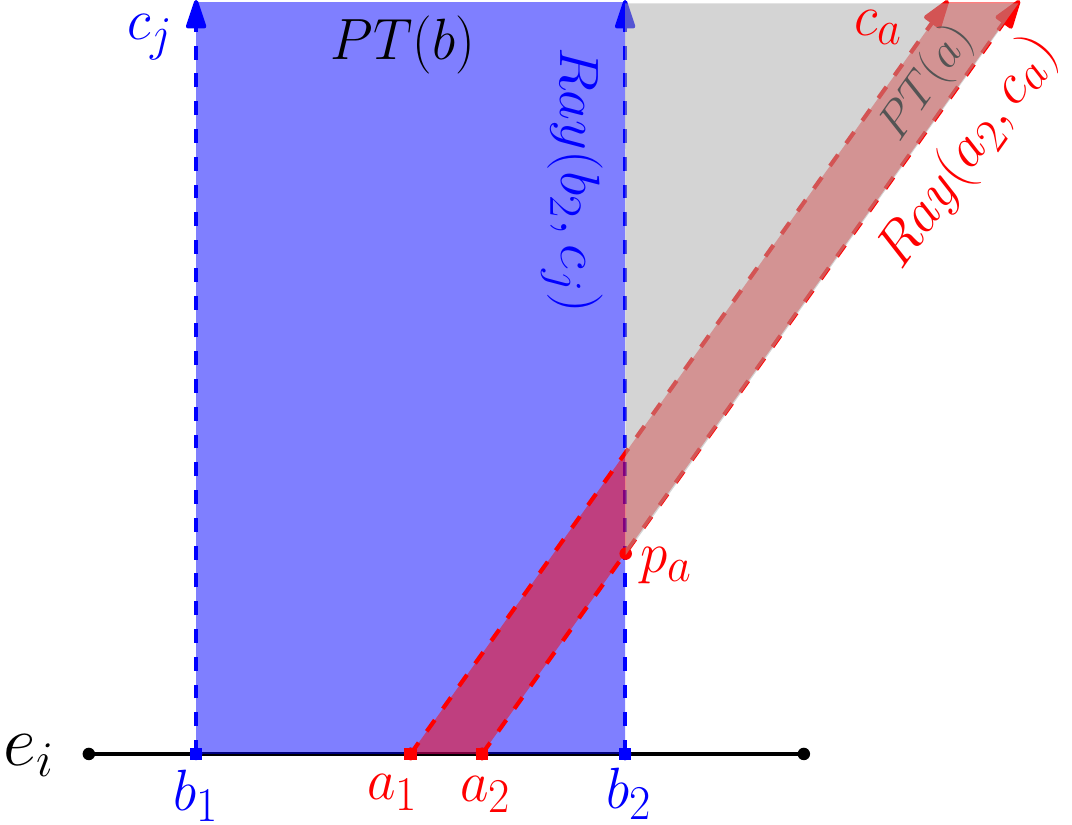}
      \newline
    \caption{The region $R_{b \cup A}$ at the end of the first step.}
    \label{fig:first_step}
    \end{minipage}%
\end{figure}
Notice the semi-infinite convex 2-chain that we obtain at the end of the first step, namely, the chain consisting of $(b_2,p_a)$ followed by  $Ray(p_a,c_a)$. It is easy to see that the region $R_{b \cup A}$ at the end of the last step, is bounded by $Ray(b_1,c_j)$, $(b_1,b_2)$, and a semi-infinite convex chain, denoted $l_A$, consisting of at most $|A|+1$ edges (see red chain in Figure~\ref{fig:R_boundaries_c}).
Finally, if $A=\emptyset$, then $R_{b\cup A}=PT(b)$ and we set $l_A=Ray(b_2,c_j)$.

Next, we set $R_{b \cup D}=PT(b) \cup \bigcup_{a \in D} \psi(a,c_j)$, and compute the convex chain $l_D$, which, together with $Ray(b_2,c_j)$ and $(b_1,b_2)$, defines the boundary of $R_{b\cup D}$.
Once again, if $D=\emptyset$, we set $l_D=ray(b_1,c_j)$.

Finally, we compute in a similar manner the convex chains $l_B$, which defines (together with $Ray(b_1,c_j)$) the boundary of $R_{b\cup B}=PT(b) \cup \bigcup_{a \in B} \psi(a,c_j)$ (see  purple chain in Figure~\ref{fig:R_boundaries_b}), and  $l_C$, which defines (together with $Ray(b_2,c_j)$) the boundary of $R_{b\cup C}=PT(b) \cup \bigcup_{a \in C} \psi(a,c_j)$. 

We now set $R= R_{b\cup A} \cup R_{b\cup B}\cup R_{b\cup C}\cup R_{b\cup D}$, then $R$ is the region of all points that can be reached by a path that passes through $b$, or passes through $a \in \delta(b) \setminus I(e_i,\overline{c_j})$ and then makes a turn in orientation $c_j$. Therefore, $R\cap e_{i+1}$ gives us the intervals on $e_{i+1}$ with length $l(e_i)+1$ and orientation $c_j$, which are created by passing through an interval in $\{b\} \cup \delta(b) \setminus  I(e_i,\overline{c_j})$.

In order to find these intervals, we identify the boundary of $R$ in each of the following four cases:
\begin{itemize}
\item {\bf Case A:} $B=\emptyset$ and $C=\emptyset$ (as illustrated in Figure~\ref{fig:R_boundaries_a})
\newline
In this case,
$R= R_{b\cup A}\cup R_{b\cup D}$, since $R_{b\cup B}=R_{b\cup C}=PT(b)$ and $PT(b)\subseteq R_{b\cup A}, R_{b\cup D}$, and $R$'s boundary is composed of $l_A$, $l_D$ and $b$.

\item {\bf Case B:} $B\neq \emptyset$ and $C= \emptyset$ (as illustrated in Figure~\ref{fig:R_boundaries_b})
\newline
In this case, the boundary of $R$ is composed of $l_B$ and $l_D$, since $R_{b\cup A} \subseteq R_{b\cup B}$.

\item {\bf Case C:}
$B=\emptyset$ and $C\ne\emptyset$ (as illustrated in Figure~\ref{fig:R_boundaries_c})
\newline
In this case, the boundary of $R$ is composed of $l_A$ and $l_C$, since $R_{b\cup D} \subseteq R_{b\cup C}$.
\item {\bf Case D:}
$B\ne\emptyset$ and $C\ne\emptyset$ (as illustrated in Figure~\ref{fig:R_boundaries_d})
\newline
in this case, $R=R_{b\cup B} \cup R_{b\cup C}$, and its boundary is the convex chain $l$ that is obtained from the chains $l_B$ and $l_C$, see Figure~\ref{fig:R_boundaries_d}.
\end{itemize}

\begin{figure}[!h]
\begin{subfigure}{.55\textwidth}
  \includegraphics[scale=0.40]{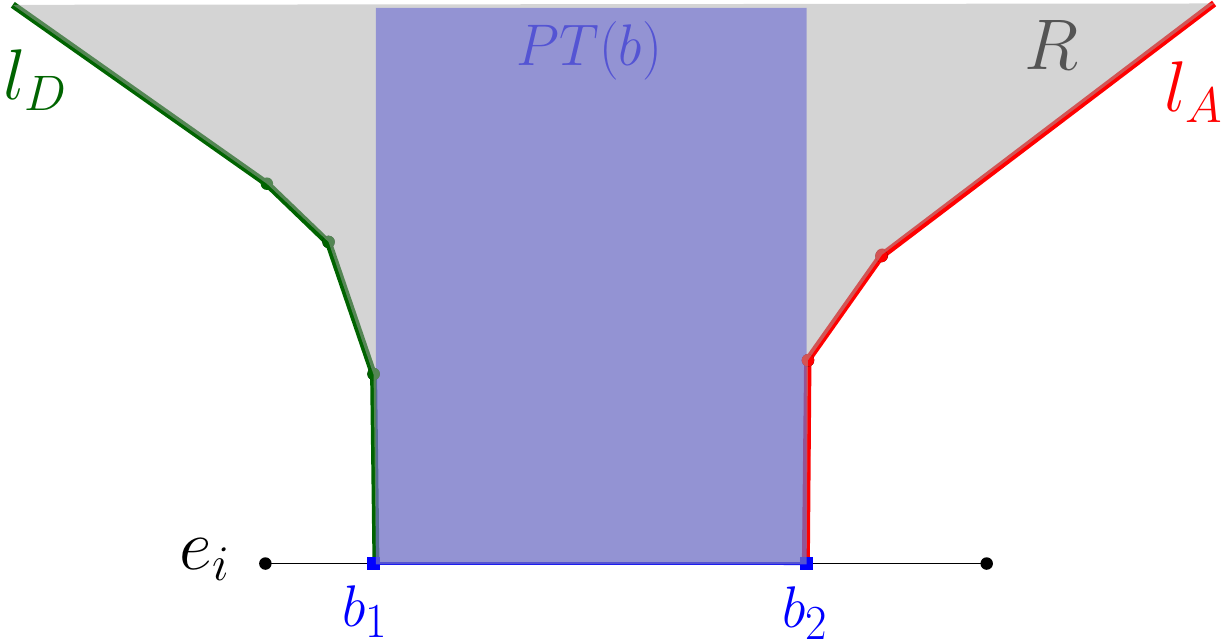}
  \newline
  \newline
  \newline
  \newline
  \caption{Case A: If $B=\emptyset$ and $C=\emptyset$, then $R$ is bounded by $l_A$ (red), $l_D$ (green) and $b$.}
  \label{fig:R_boundaries_a}
\end{subfigure}%
 \hspace{2em}%
      \begin{subfigure}{.55\textwidth}
  \includegraphics[scale=0.35]{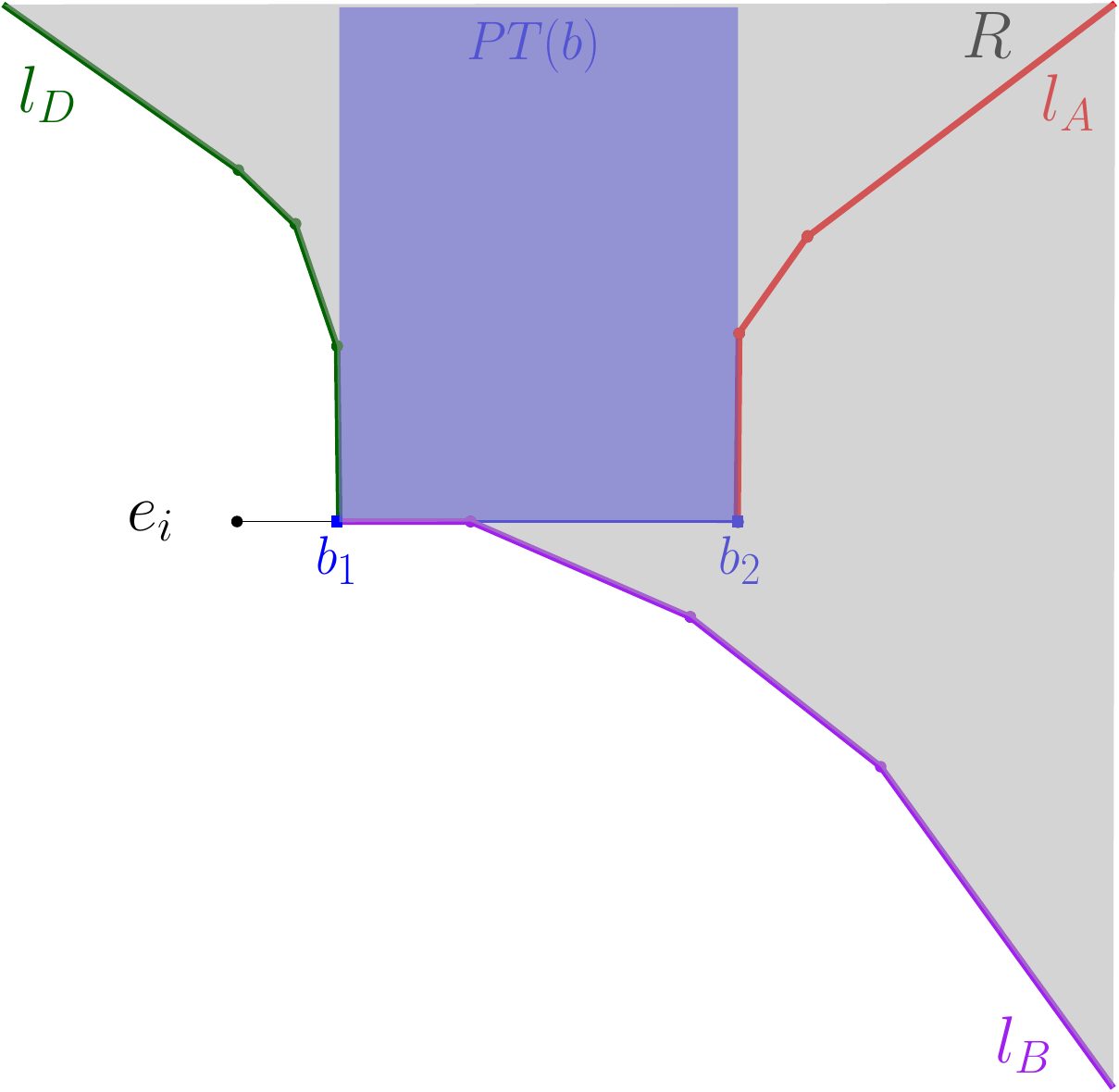}
  \caption{Case B: If $B\neq \emptyset$ and $C= \emptyset$, then $R$ is bounded by $l_B$ (purple) and $l_D$ (green).}
  \label{fig:R_boundaries_b}
  \end{subfigure}%
  \newline
    \newline
  \begin{subfigure}{.55\textwidth}
  \hspace{-3em}%
  \centering
  \includegraphics[scale=0.35]{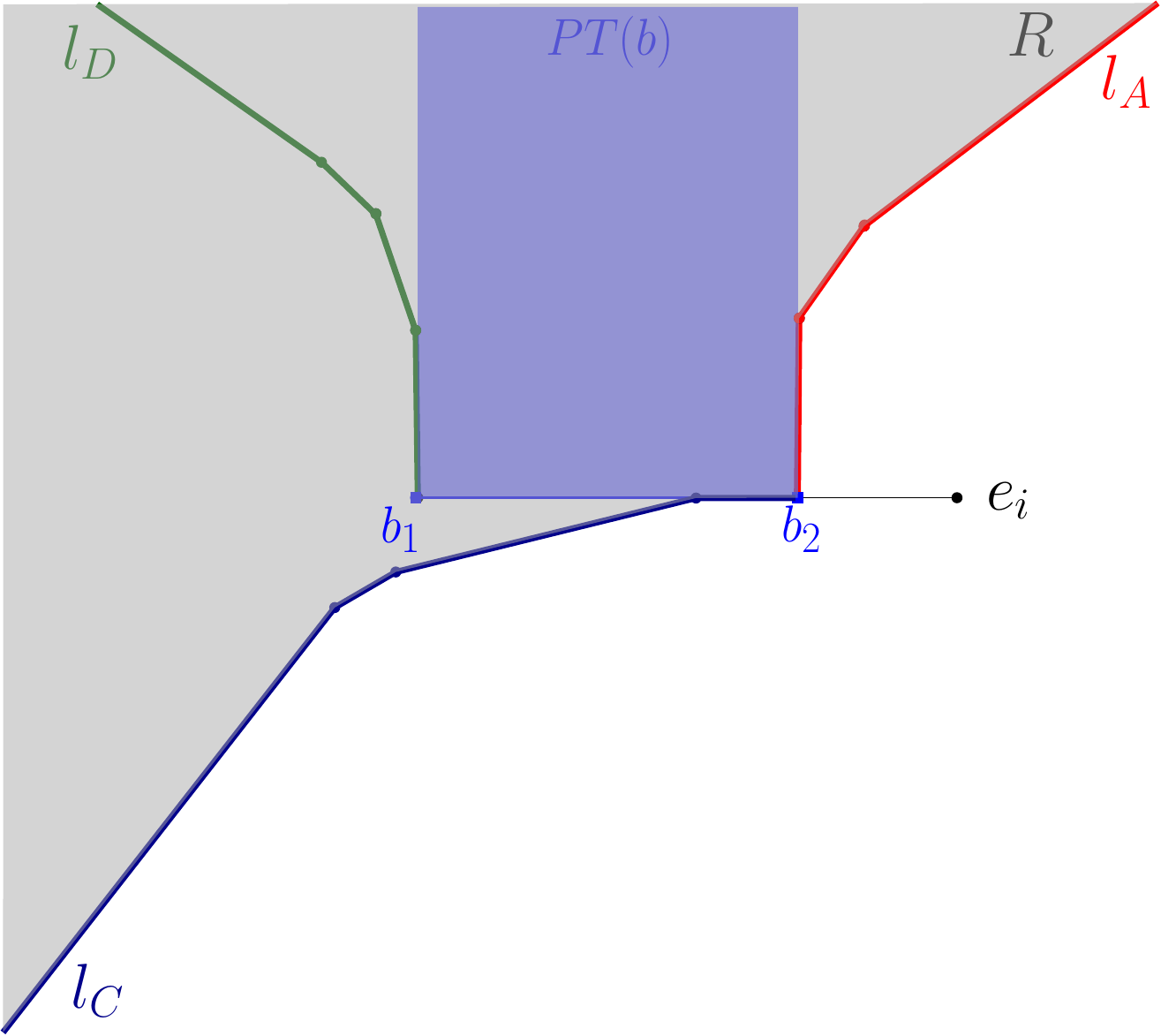}
  \caption{Case C: If $B= \emptyset$ and $C\neq \emptyset$, then $R$ is bounded by  $l_A$ (red) and $l_C$ (dark blue).}
  \label{fig:R_boundaries_c}
\end{subfigure}%
 \hspace{2em}%
      \begin{subfigure}{.55\textwidth}
  \includegraphics[scale=0.35]{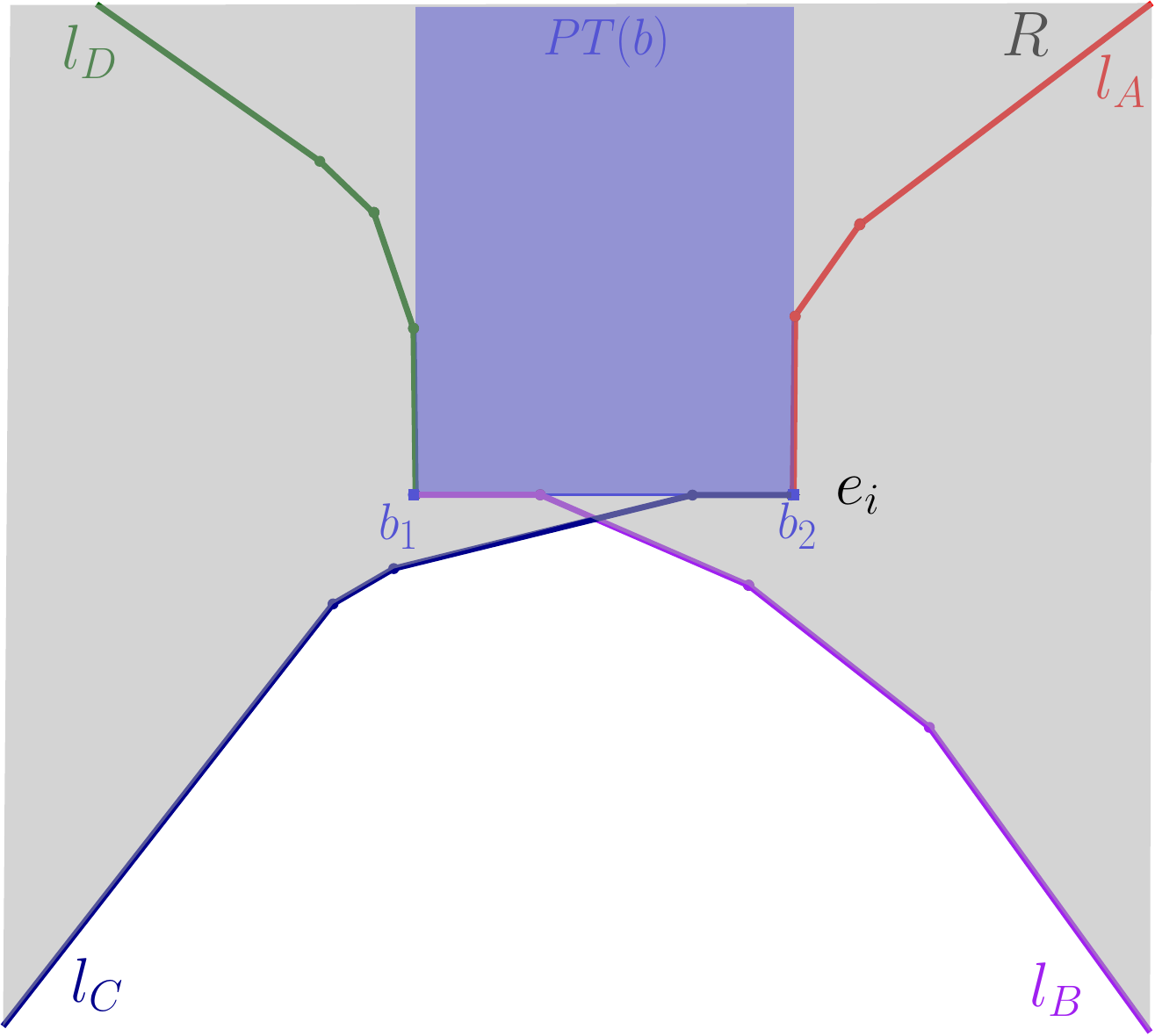}
  \caption{Case D: If $B\neq \emptyset$ and $C\neq \emptyset$, then $R$ is bounded by the chain $l$ obtained from  $l_B$ (purple) and $l_C$ (dark blue).}
  \label{fig:R_boundaries_d}
  \end{subfigure}%
\caption{The boundary of $R$.}
\label{fig:R_boundaries}
\end{figure}

We now examine how $e_{i+1}$ can intersect $R$, in each of these cases.
First, if $e_{i+1}$ does not intersect the boundary of $R$, then either $R\cap e_{i+1} = e_{i+1}$ or $R\cap e_{i+1} = \emptyset$.
In the former case, one interval is formed on $e_{i+1}$, which contains both its endpoints, and in the latter case, no interval is formed on $e_{i+1}$.
Next, assume that $e_{i+1}$ intersects the boundary of $R$. We distinguish between the case where there is an interval $h\in I(e_i,\overline{c_j}$) such that $PT(h)\cap e_{i+1}\neq \emptyset$, and the case where there is no such interval.

\vspace{2mm}
\noindent
{\bf There is an interval $h \in I(e_i,\overline{c_j}$) such that $PT(h)\cap e_{i+1}\neq \emptyset$.}\\
Then, by Lemma~\ref{lem:lemma3.2}, $PT(b)\cap e_{i+1}= \emptyset$.

{\bf If Case A:}
Clearly, $e_{i+1}$ cannot intersect both $l_A$ and $l_D$, since this would imply $PT(b)\cap e_{i+1}\neq \emptyset$ (see Figure~\ref{fig:intersection_R_a}).
Therefore, $e_{i+1}$ intersects exactly one of these chains, either at a single point or at two points. If $e_{i+1}$ intersects the chain at a single point $q$, then a single interval is formed on $e_{i+1}$, whose endpoints are $q$ and the endpoint of $e_{i+1}$ that lies in $R$ (see the edge $e^1_{i+1}$ in Figure~\ref{fig:e_intersect_R_a}).
If $e_{i+1}$ intersects the chain at two points, $p$ and $p'$, then two intervals are formed on $e_{i+1}$. The endpoints of these intervals are $p$ and $p'$ on one side and the corresponding endpoints of $e_{i+1}$ on the other side (see the edge $e^2_{i+1}$ in  Figure~\ref{fig:e_intersect_R_a}).

{\bf If Case B:}
Unlike Case~A, the fact that $PT(b) \cap e_{i+1} = \emptyset$ does not prevent $e_{i+1}$ from intersecting both $l_B$ and $l_D$. However, $e_{i+1}$ can intersect these chains in at most two points (in total), and as in Case~A at most two intervals are formed on $e_{i+1}$, where each of them contains an endpoint of $e_{i+1}$ (see Figure~\ref{fig:intersection_R_b}).

{\bf If Case C:}
Since Cases~B and~C are symmetric, at most two intervals are formed on $e_{i+1}$, each of which contains an endpoint of $e_{i+1}$.

{\bf If Case D:}
If $e_{i+1}$ intersects $l$ at a single point $q$, then a single interval is formed on $e_{i+1}$, whose endpoints are $q$ and the endpoint of $e_{i+1}$ that lies in $R$.
If $e_{i+1}$ intersects $l$ at two points $p$ and $p'$,
then $R\cap e_{i+1}$ consist of all the points on $e_{i+1}$, except for those in the interior of $(p,p')$. Therefore, two intervals are formed on $e_{i+1}$, and their endpoints are $p$ and $p'$ on one side and the corresponding endpoints of $e_{i+1}$ on the other side (see  Figure~\ref{fig:e_intersect_R_b}) 

\begin{figure}[h!]
\begin{subfigure}{.45\textwidth}
\centering
  \includegraphics[scale=0.35]{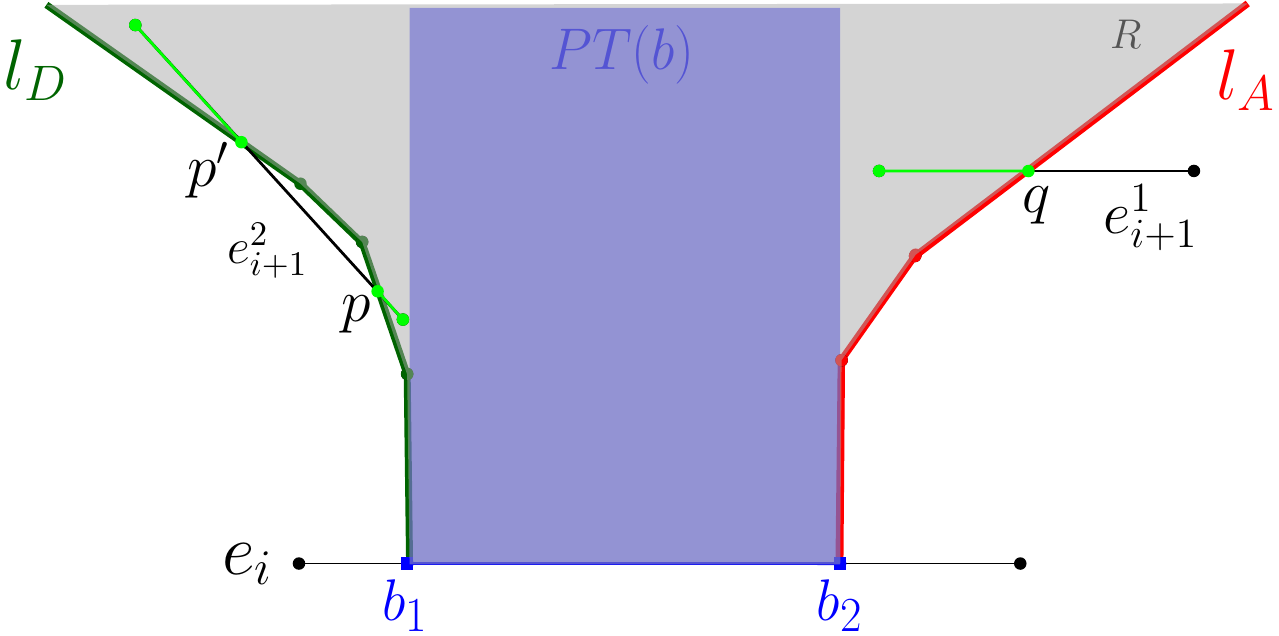}
  \newline
\newline
\newline
\newline
  \caption{Case~A.}
  \label{fig:e_intersect_R_a}
\end{subfigure}%
\hspace{3em}%
      \begin{subfigure}{.45\textwidth}
  \includegraphics[scale=0.35]{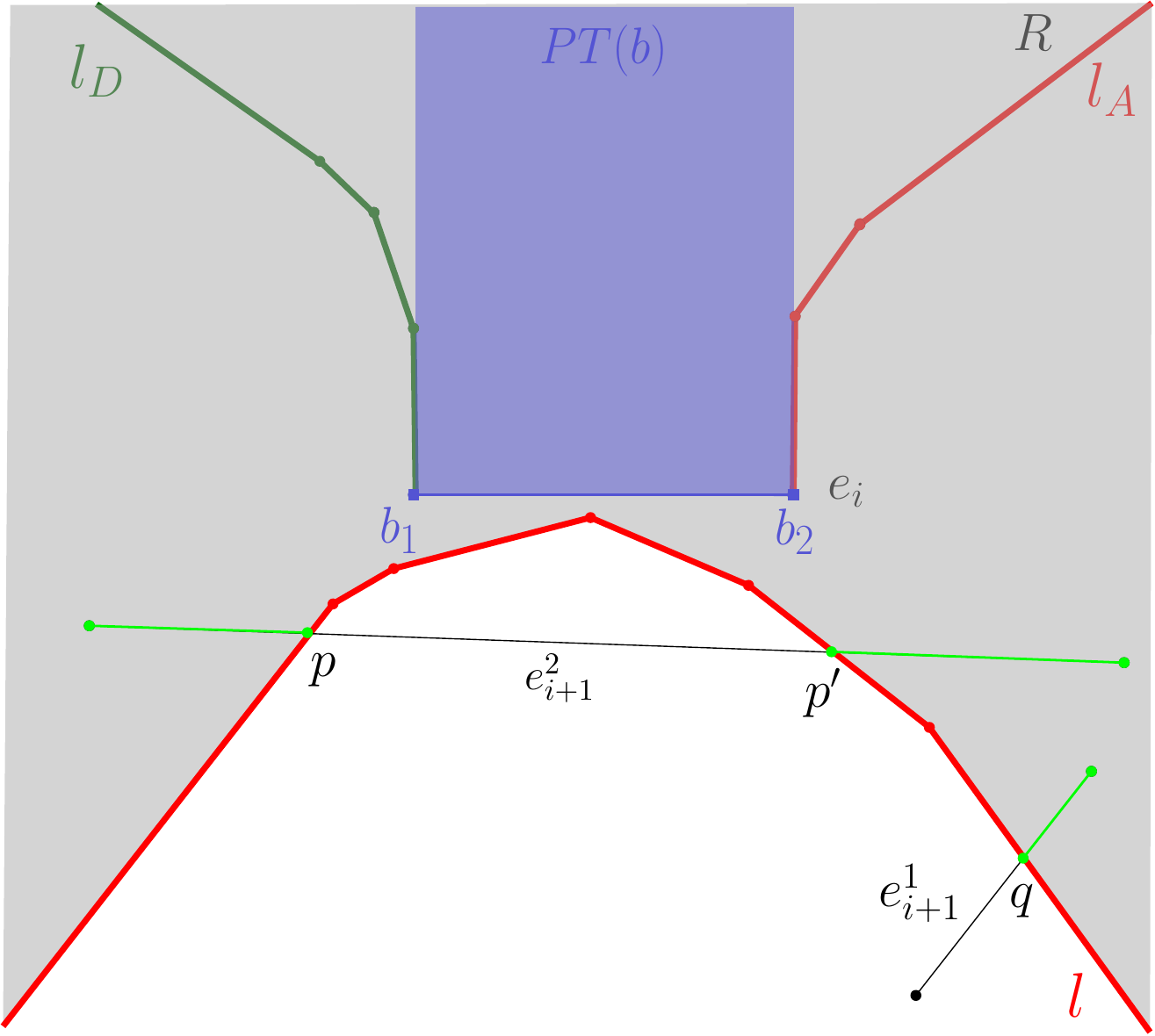}
  \caption{Case~D.}
  \label{fig:e_intersect_R_b}
  \end{subfigure}%
\caption{$e_{i+1}$ intersects $R$'s boundary either at a single point $q$ or at two points $p$ and $p'$.}
\label{fig:e_intersect_R}
\end{figure}

We have shown that by passing through an interval in $\{b\} \cup \delta(b) \setminus I(e_i,\overline{c_j})$, at most two intervals (with associated length $l(e_i)+1$ and orientation $c_j$) are formed on $e_{i+1}$. Moreover, each of these intervals contains an endpoint of $e_{i+1}$. Therefore, the total number of such intervals that are formed on $e_{i+1}$, by passing through an interval in $\bigcup_{b\in I^+(e_i,c_j)} \{b\} \cup \delta(b) \setminus I(e_i,\overline{c_j})$ is at most two. (For each endpoint $p$ of $e_{i+1}$, we retain only the longest interval with $p$ as one of its endpoints.) 

Finally, observe that by passing through an interval in $I(e_i,\overline{c_j})$ and turning backwards in orientation $c_j$, at most one interval is formed on $e_{i+1}$, which does not necessarily contain an endpoint of $e_{i+1}$.

We conclude that at most $|I(e_i,\overline{c_j})| + 2$ intervals (with associated length $l(e_i)+1$ and orientation $c_j$) are formed on $e_{i+1}$ during the execution of the algorithm (in the case that there is an interval $h \in I(e_i,\overline{c_j}$) such that $PT(h)\cap e_{i+1}\neq \emptyset$). We have used the equality $\bigcup_{b\in I^+(e_i,c_j)} \{b\} \cup \delta(b) = I^+(e_i,c_j) \cup I(e_i) \setminus I(e_i,c_j)$, which follows from Lemma~\ref{claim:lemma3.1}. 

We now proceed to the complementary case. 

\vspace{2mm}
\noindent
{\bf For any interval $h\in I(e_i,\overline{c_j}$), $PT(h)\cap e_{i+1} = \emptyset$.}
We defer the details of this case (which are similar to those of the previous case) to Appendix~\ref{app:second_case}.
These details lead to the conclusion that at most $|I^+(e_i,c_j)|+2$ intervals (with associated length $l(e_i)+1$ and orientation $c_j$) are formed on $e_{i+1}$ during the execution of the algorithm in this case.

\begin{figure}[h]
\begin{subfigure}{.45\textwidth}
\centering
  \includegraphics[scale=0.48]{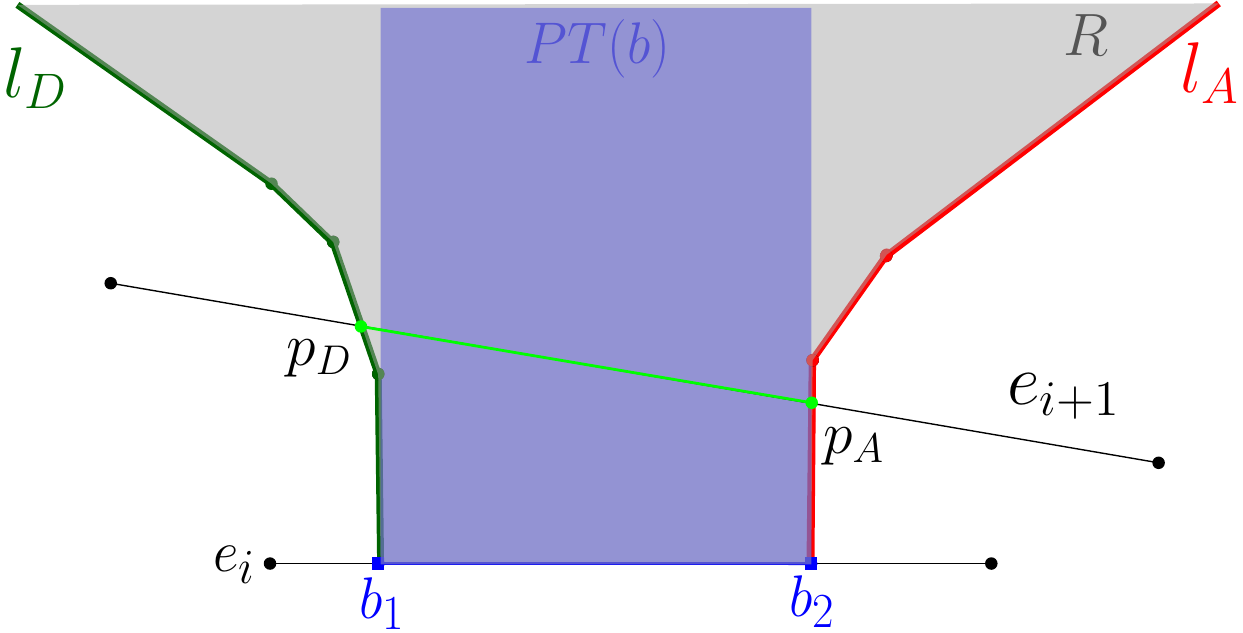}
  \newline
  \newline
  \newline
    \newline
  \caption{$e_{i+1}$ intersects $R$'s boundary at two points $p_A$ and $p_D$, creating a single internal interval (green) on $e_{i+1}$.}
  \label{fig:intersection_R_a}
\end{subfigure}%
\hspace{3em}%
      \begin{subfigure}{.45\textwidth}
  \includegraphics[scale=0.4]{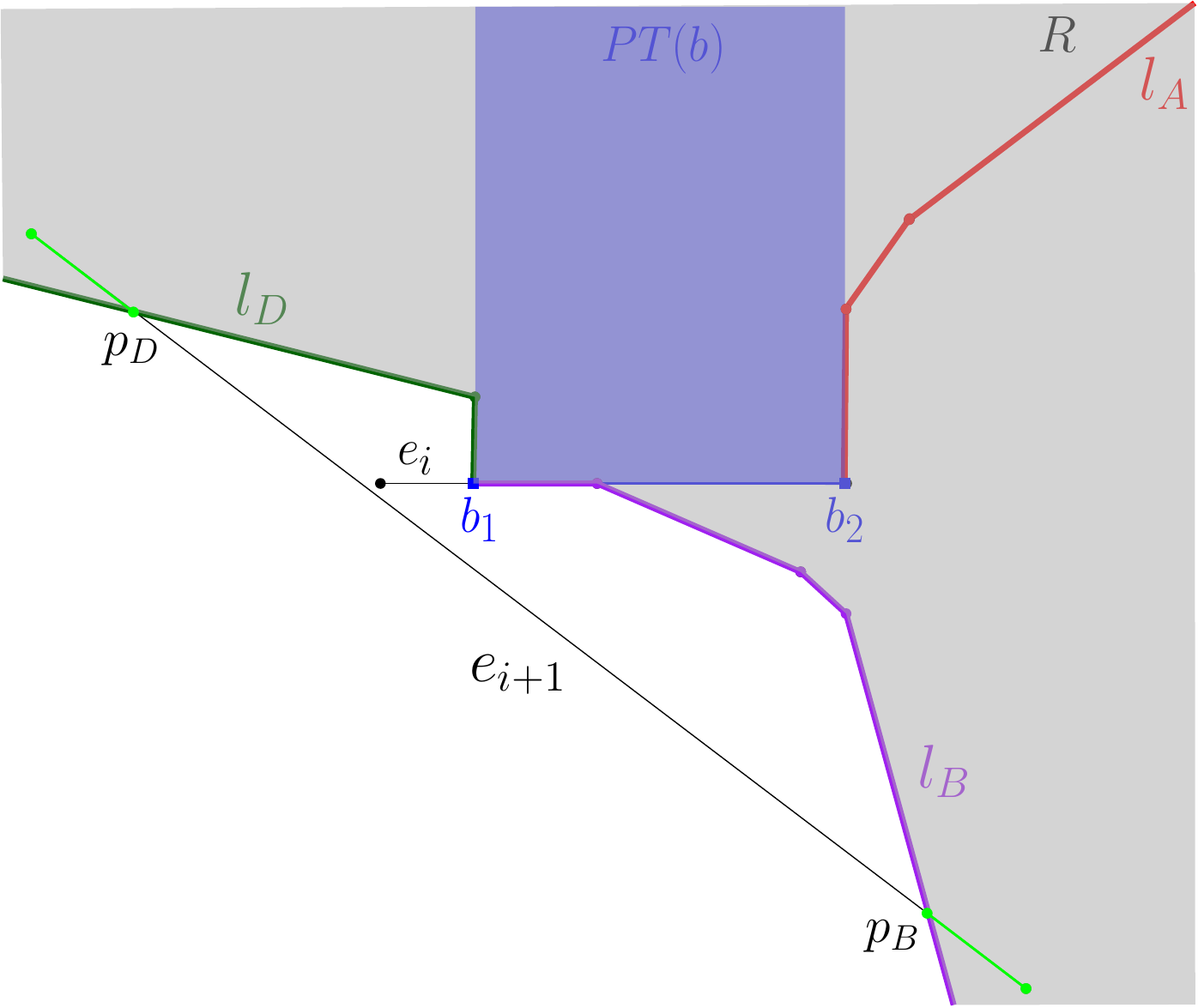}
  \caption{$e_{i+1}$ intersects $R$'s boundary at two points $p_B$ and $p_D$, creating two intervals (green), where each of them contains an endpoint of $e_{i+1}$.}
  \label{fig:intersection_R_b}
  \end{subfigure}%
\caption{Cases A and B, where there is no such interval $h$.}
\label{fig:intersection_R}
\end{figure}

Since only one of the two cases holds (i.e., either there is such an interval $h$ or there is not), we conclude that at most $max\left\{|I(e_i,\overline{c_j})|+2, |I^+(e_i,c_j)|+2 \right\}$ $= max\left\{ |I(e_i,\overline{c_j})|, |I^+(e_i,c_j)| \right\} +2$ intervals with associated length $l(e_i)+1$ and orientation $c_j$ are formed on $e_{i+1}$ during the execution of the algorithm. This completes the proof of Claim~\ref{claim:claim3}.
\end{proof}

\begin{lemma}
\label{claim:lemma4.1}
For any interval $a\in I(e_i)$ and orientation $c_j\in C$, we do not need to compute the interval on $e_{i+1}$ with associated length and orientation $l(e_i)+2$ and $c_j$, respectively, which is formed by passing through $a$ and then making two turns, where the first is in orientation $\overline{c_a}$.
\end{lemma}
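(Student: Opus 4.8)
The plan is to show that every point $r$ on $e_{i+1}$ reachable by a path that passes through $a$, turns first to $\overline{c_a}$, and turns second to $c_j$ is already reachable either by a path of length $l(e_i)+1$ with last orientation $c_j$ (hence recorded among the $+1$-intervals), or by a path through $a$ of the same length $l(e_i)+2$ whose first turn is some $c\neq\overline{c_a}$ (hence recorded among the $+2$-intervals we do compute). The key starting observation is that a first turn of exactly $\overline{c_a}$ forces the incoming link (of orientation $c_a$, which crosses $a$ at a point $p$ and is extended through $PT(a)$ to the turn point $q$) and the outgoing link (of orientation $\overline{c_a}$, from $q$ to the second-turn point $q'$) to be collinear and oppositely directed. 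Hence $q'$ lies on the line $\ell=Line(p,c_a)$, and the whole detour $p\to q\to q'$ stays on $\ell$.

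First I would dispose of the case $q'\in PT(a)$, i.e.\ $q'$ lies on the forward side of $a$ along $\ell$. Then the two collinear antiparallel links can be replaced by the single link from the start of the incoming link directly to $q'$; this link still crosses $a$ at $p$, so the modified path still passes through $a$, it ends at $r$ with last orientation $c_j$, and it has one fewer link. Consequently $r\in\psi(a,c_j)$, so $r$ already belongs to a $+1$-interval and no $+2$-interval is needed to account for it.

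The remaining, and main, case is $q'$ behind $a$, i.e.\ in $\psi(a,\overline{c_a})\setminus PT(a)$. Here the forward excursion to $q$ is needed only to touch $a$, so the detour cannot be shortened without losing the visit to $e_i$; the path genuinely has length $l(e_i)+2$ and must be matched by another length-$(l(e_i)+2)$ path. The idea is to reroute the $U$-turn: since $q'$ lies behind $a$, one picks a point $q''\in PT(a)$ far enough forward so that the segment from $q''$ through $a$ to $q'$ has an orientation $c$ that is as close to $\overline{c_a}$ as the orientation set permits (an adjacent orientation $\overline{c_a}_{+1}$ or $\overline{c_a}_{-1}$); the path through $a$ to $q''$, then to $q'$ in orientation $c$, then to $c_j$ reaches the same $r$ with a first turn $c\neq\overline{c_a}$. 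I expect the main obstacle to be verifying that this rerouting, together with the forward case, covers \emph{all} of $\psi(a,\overline{c_a})$ extended in orientation $c_j$ and intersected with $e_{i+1}$ — in particular the points far behind $a$, where the nearly-opposite admissible orientations $\overline{c_a}_{\pm 1}$ may fail to keep $q''$ inside $PT(a)$. For such points one must instead argue that they already lie in $\psi(a,c_j)$ (which happens precisely when $c_j$ drifts back toward the slab), and then check that the finitely many intervals these configurations can contribute on $e_{i+1}$ are anchored at endpoints of $e_{i+1}$ and are therefore subsumed by the interval count established in Claim~\ref{claim:claim3}.
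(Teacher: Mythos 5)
Your first case ($q'\in PT(a)$, i.e., the second turn happens before the path recrosses $e_i$) is sound and essentially matches the paper's first case (the paper phrases it as deleting the link $pp'$ to get a strictly shorter tour; your observation that $r\in\psi(a,c_j)$ and hence is covered by a $+1$-interval amounts to the same thing). The genuine gap is in your main case, and your own worry is exactly where it breaks: the rerouting via the adjacent orientations $\overline{c_a}_{+1},\overline{c_a}_{-1}$ only reaches second-turn points in $\Delta_a=\psi(a,\overline{c_a}_{+1})\cup\psi(a,\overline{c_a}_{-1})$, which extends only a bounded distance behind $a$, while the U-turn's second-turn point $q'$ can be arbitrarily far back in the backward semi-slab. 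Your fallback claim --- that the uncovered points ``already lie in $\psi(a,c_j)$'' --- is false precisely in the regime that matters, namely $c_j\in\{c_{a-1},c_a,c_{a+1}\}$ (when $c_j$ is any other orientation, Lemma~\ref{claim:lemma4.2} already makes every point reachable, so there is nothing to prove). Concretely, take eight equally spaced orientations, $a$ the vertical interval from $(0,0)$ to $(0,1)$ with $c_a$ pointing east, and $c_j=c_{a+1}$: the point $r=(-10,0.5)$ is reached by a U-turn path turning at $q'=(-10.5,1)$, yet $r$ lies neither in $\Delta_a$ nor in $\psi(a,c_j)$ (both regions are bounded to the west near $a$, while the U-turn's reachable set $T$ extends arbitrarily far west). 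The closing appeal to Claim~\ref{claim:claim3} cannot rescue this: that claim is a counting bound on intervals of length $l(e_i)+1$; it does not establish that an omitted length-$(l(e_i)+2)$ contribution is represented anywhere in the computed sets.

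The missing idea --- the heart of the paper's proof of this case --- is Lemma~\ref{claim:lemma3.1}. When the U-turn link recrosses $e_i$, the prefix up to that crossing is a path of length $l(e_i)+1$ ending on $e_i$ with orientation $\overline{c_a}$; hence $a$ is contained in an interval $b\in I^+(e_i,\overline{c_a})$, and $PT(b)$ contains the entire backward semi-slab over $a$. Consequently $T\subseteq\psi(b,c_j)$, and $\psi(b,c_j)\cap e_{i+1}$ is exactly one of the $+1$-intervals the algorithm computes with respect to $I^+(e_i)\setminus I^+(e_i,c_j)$. In other words, the U-turn contributions are not subsumed by other two-turn paths through $a$ (your counterexample region proves they cannot be), but by the $I^+$ bookkeeping at $e_i$ --- a mechanism your proposal never invokes.
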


\begin{proof}
By Claim~\ref{claim:claim2}, $l(e_i) \le l(e_{i+1}) \le l(e_i)+2$. So, the intervals on $e_{i+1}$ of length $l(e_i)+2$ are only relevant if  $l(e_{i+1}) > l(e_i)$ (Claim~\ref{claim:claim1}). Assume therefore that $l(e_{i+1}) > l(e_i)$, and let $a \in I(e_i)$ (e.g., the red interval in Figure~\ref{fig:pi_to_pi'}).
Let $\pi$ be a tour of $E$ that passes through $a$ at a point $p_i$, makes a turn in orientation $\overline{c_a}$ at point $p$, and makes another turn in orientation $c_j$ at point $p'$, such that $\pi_{i+1}$ (the portion of $\pi$ from $s$ to $e_{i+1}$) corresponds to an interval of length $l(e_i)+2$.

We distinguish between two cases.
If $pp' \cap e_i = \emptyset$ (i.e., the second turn is before $\pi$ crosses $e_i$ again), as shown in Figure~\ref{fig:path_pi}, then $pp'$ does not intersect $e_{i+1}$,
since this would imply $l(e_i)=l(e_{i+1})$. 
Therefore, $\pi$ reaches $e_{i+1}$ only after the turn at $p'$, and the tour $\pi'$ which is obtained from $\pi$ by deleting the link $pp'$ (see Figure~\ref{fig:path_pi'}), is a tour of $E$ of length $|\pi|-1$, hence $\pi$ is not a minimum-link tour of $E$.
Since our goal is to find a minimum-link tour of $E$, we do not need to compute the interval on $e_{i+1}$ formed by paths such as $\pi$ satisfying the condition above.
 \begin{figure}[!h]
 \begin{subfigure}{.5\textwidth}
  \includegraphics[scale=0.5]{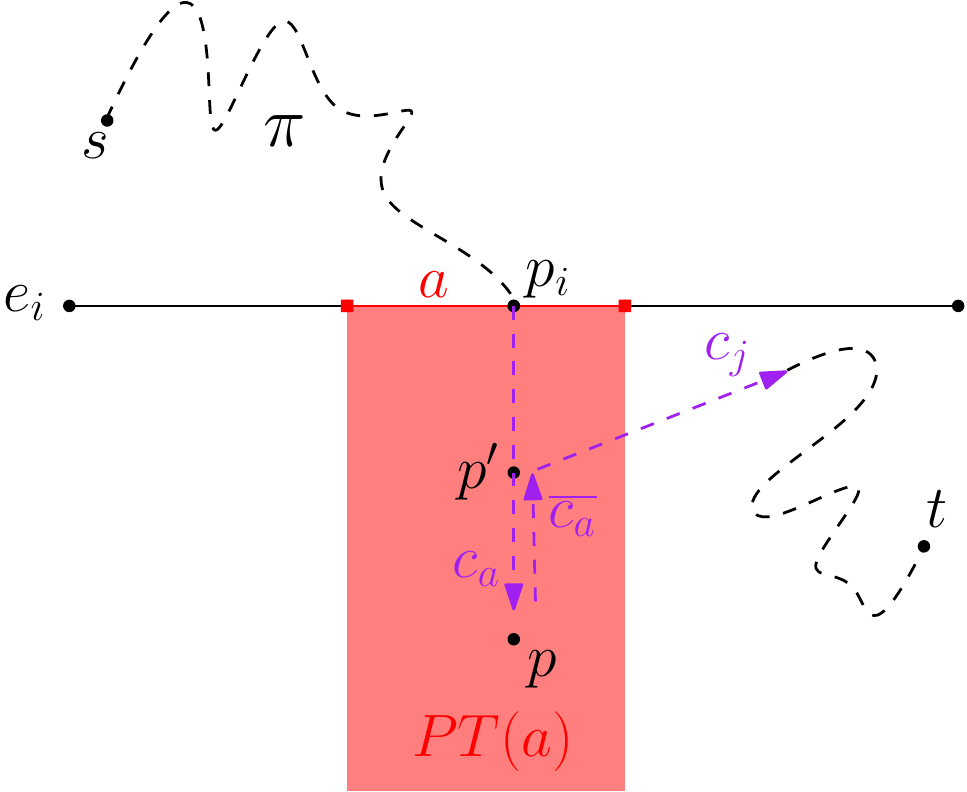}
  \caption{A tour $\pi$ that passes through $a\in I(e_i)$ (red), makes a turn in orientation $\overline{c_a}$, and another turn in orientation $c_j$.}
  \label{fig:path_pi}
\end{subfigure}%
 \hspace{2em}%
      \begin{subfigure}{.5\textwidth}        
       \includegraphics[scale=0.5]{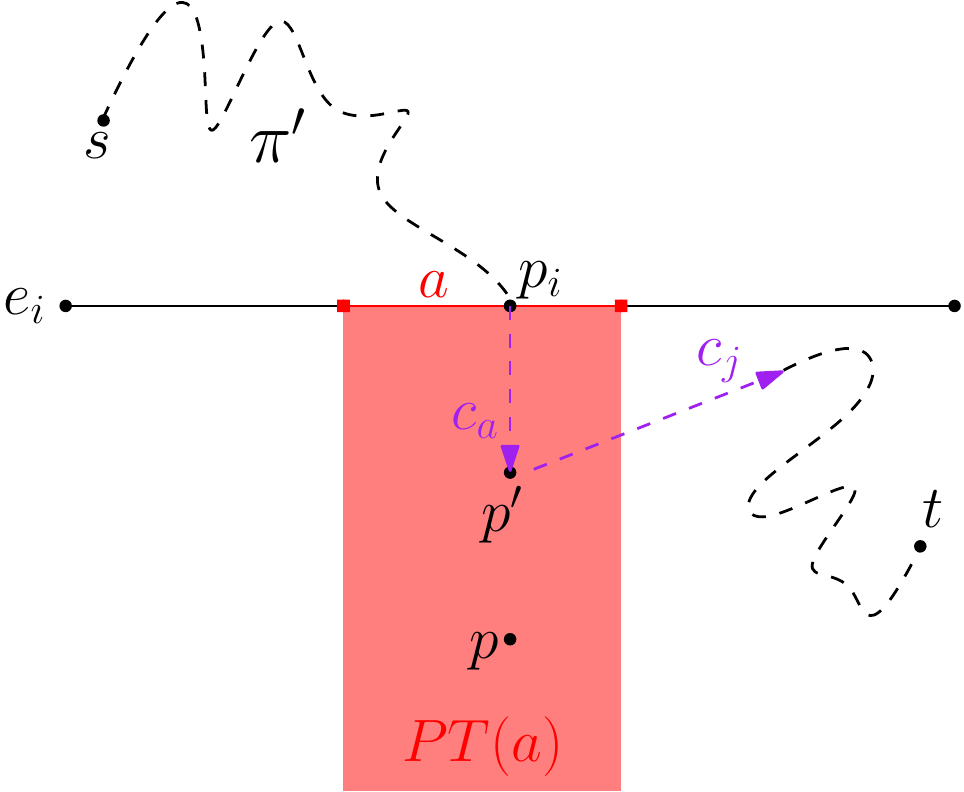}
        \caption{The tour $\pi'$ of length $|\pi|-1$, which is obtained by deleting the link $pp'$ from $\pi$.\newline}
  \label{fig:path_pi'}
\end{subfigure}%
\caption{Proof of Lemma~\ref{claim:lemma4.1}. The case where the second turn is before $\pi$ crosses $e_i$ again.}
\label{fig:pi_to_pi'}
\end{figure}

If $pp' \cap e_i \neq \emptyset$ (i.e., the second turn is not before $\pi$ crosses $e_i$ again), let $T$ denote the region of all points that can be reached by such paths, i.e., paths such as $\pi$ satisfying the condition above (see the orange region in Figure~\ref{fig:T_region_a}). Then $T \cap e_{i+1}$ is the interval on $e_{i+1}$ with associated length $l(e_i)+2$ and orientation $c_j$, formed by these paths.
But, by Lemma~\ref{claim:lemma3.1}, there exists $b\in I^+(e_i,\overline{c_a})$ such that $a\subseteq b$ (see the blue interval in Figure~\ref{fig:T_region_b}), and clearly $T \subseteq \psi(b,c_j)$, implying $T \cap e_{i+1} \subseteq \psi(b,c_j) \cap e_{i+1}$. The latter interval, i.e., $\psi(b,c_j) \cap e_{i+1}$  is computed by our algorithm, so we do not need to compute $T \cap e_{i+1}$.  

\begin{figure}[!h]
\begin{subfigure}{.5\textwidth}
  \includegraphics[scale=0.55]{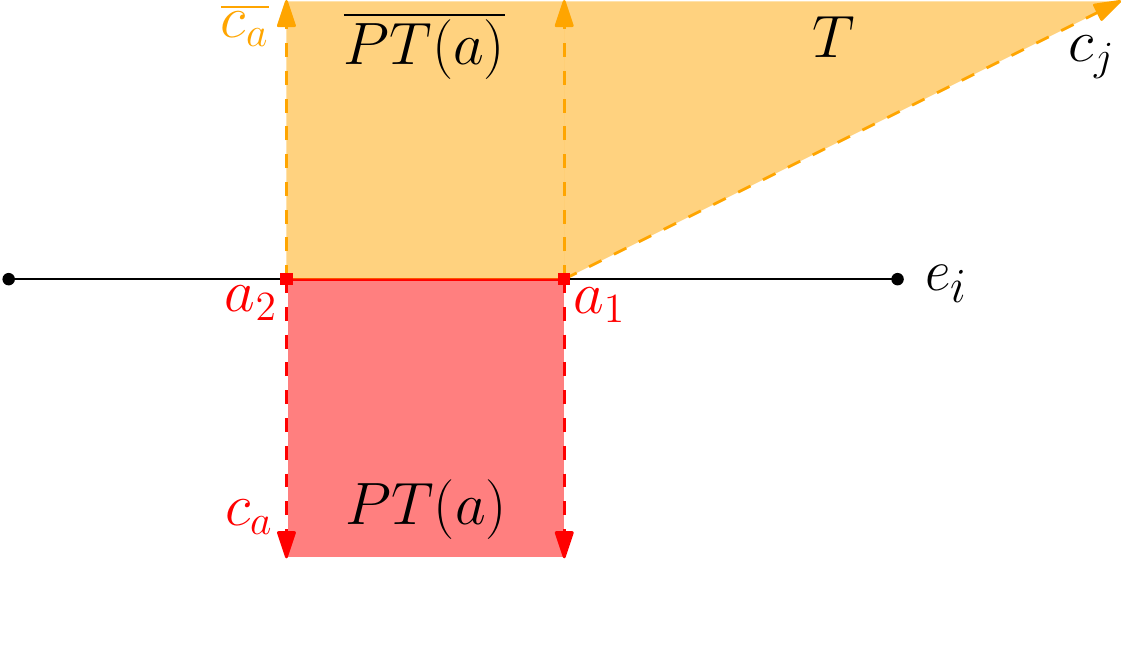}
  \caption{An interval $a$ (red) and the corresponding region $T$ (orange).}
  \label{fig:T_region_a}
\end{subfigure}%
 \hspace{2em}%
      \begin{subfigure}{.5\textwidth}
  \includegraphics[scale=0.55]{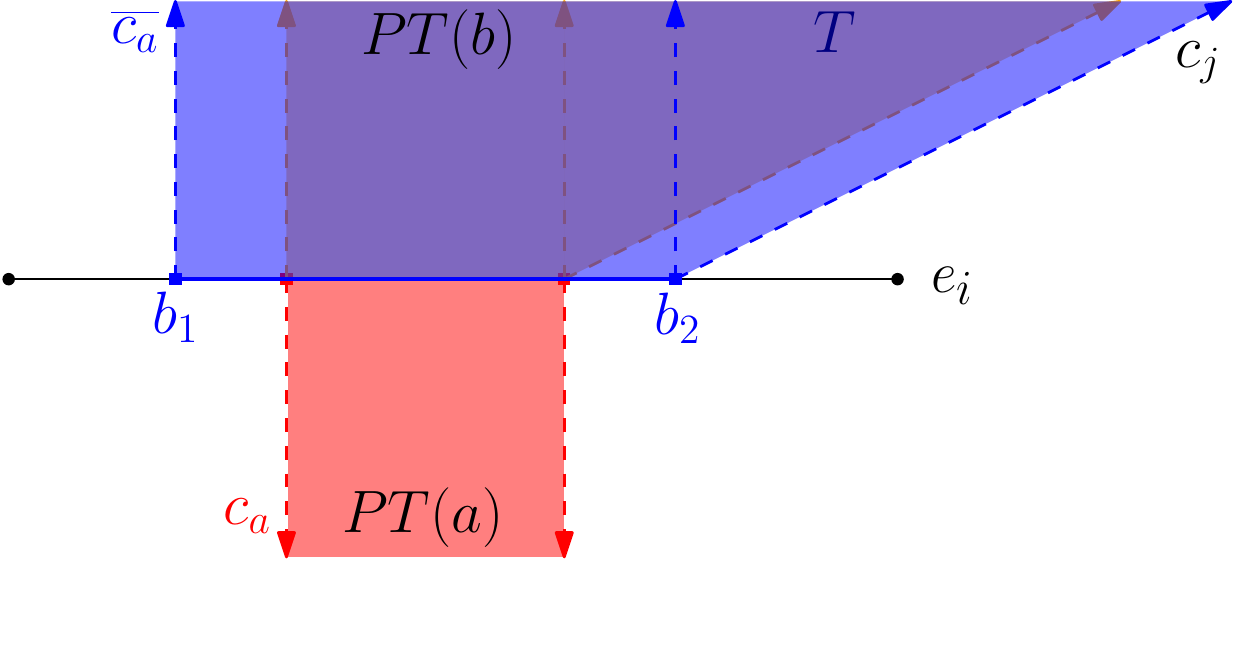}
  \caption{An interval $b$ and the corresponding region $\psi(b,c_j)$ (blue), which contains $T$.}
  \label{fig:T_region_b}
\end{subfigure}%
\caption{Proof of Lemma~\ref{claim:lemma4.1}. The case where the second turn is not before $\pi$ crosses $e_i$ again.}
\label{fig:T_region}
\vspace{-5mm}
\end{figure}
\end{proof}

\begin{lemma}
\label{claim:lemma4.2}
For any interval $a\in I(e_i)$, any point $p\in \mathbb{R}^2$, and any orientation $c_j \notin \{c_a, c_{a+1}, c_{a-1}\}$, $p$ can be reached by a path that passes through $a$ and then makes a turn in some orientation $c\neq \overline{c_a}$ and another turn in orientation $c_j$.
\end{lemma}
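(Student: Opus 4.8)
The plan is to reduce the statement to a single covering property and then establish that property by a Minkowski-sum/angular argument. First I would peel off the last link: a path of the required form reaches $p$ if and only if there is a point $q$ lying on $Ray(p,\overline{c_j})$ that is itself reachable by a path that passes through $a$ and then makes a single turn into some orientation $c\neq\overline{c_a}$, since from such a $q$ the final turn into $c_j$ followed by the link $qp$ reaches $p$. Writing $W=\bigcup_{c\in C\setminus\{\overline{c_a}\}}\psi(a,c)$ for the set of all points reachable by passing through $a$ and making one such turn, the lemma is therefore equivalent to the claim that $Ray(p,\overline{c_j})\cap W\neq\emptyset$ for every $p\in\mathbb{R}^2$. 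Note that each $q\in\psi(a,c)\subseteq W$ already records a turn into $c\neq\overline{c_a}$, so the resulting path has exactly the form demanded by the statement.

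Next I would describe $W$ explicitly. Each region $\psi(a,c)$ is the Minkowski sum of the segment $a$ with the convex cone generated by $c_a$ and $c$, so $W$ is $a$ summed with $K:=\bigcup_{c\neq\overline{c_a}}\mathrm{Cone}(c_a,c)$. As $c$ ranges over $C\setminus\{\overline{c_a}\}$ the extreme choices are the two neighbors $\overline{c_a}_{+1}$ and $\overline{c_a}_{-1}$ of $\overline{c_a}$, and the two cones $\mathrm{Cone}(c_a,\overline{c_a}_{+1})$ and $\mathrm{Cone}(c_a,\overline{c_a}_{-1})$ already contain all the others; their union is the set of all directions except the open wedge $U$ lying strictly between $\overline{c_a}_{-1}$ and $\overline{c_a}_{+1}$ and containing $\overline{c_a}$ (the wedge ``behind'' $a$). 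Consequently $W=\psi(a,\overline{c_a}_{+1})\cup\psi(a,\overline{c_a}_{-1})$, and its complement is $N:=\mathbb{R}^2\setminus W=(a_1+U)\cap(a_2+U)$, a region all of whose unbounded (recession) directions lie in the closed wedge $\overline{U}$.

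Finally I would invoke the hypothesis on $c_j$. By assumption $c_j\notin\{c_a,c_{a+1},c_{a-1}\}$, and since $\overline{c_{a\pm1}}=\overline{c_a}_{\mp1}$ this gives $\overline{c_j}\notin\{\overline{c_a},\overline{c_a}_{+1},\overline{c_a}_{-1}\}$; as these three are, by the definition of immediate neighbors, the only orientations of $C$ contained in $\overline{U}$, we conclude $\overline{c_j}\notin\overline{U}$. Hence $\overline{c_j}$ is not a recession direction of $N$, so along the line $\ell$ through $p$ parallel to $c_j$ the set $N\cap\ell$ is bounded in the direction $\overline{c_j}$; travelling far enough along $Ray(p,\overline{c_j})$ therefore leaves $N$ and enters $W$, producing the required point $q$.

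The main obstacle is the explicit description of $W$ in the middle step: proving that the union of the $\psi(a,c)$ over all admissible $c$ collapses to the two neighbor regions, and that its complement is exactly the intersection of the two translated wedges $a_1+U$ and $a_2+U$. This is where I expect the real work, together with one degenerate case---when the wedge $U$ is as wide as a half-plane or more---where I would argue instead from the convexity of $W=a\oplus K$ and the fact that $\overline{c_j}$ lies in the interior of the recession cone $K$. Keeping the orientation conventions straight throughout (that $q$ is reached along $c$ and $p$ along $c_j$) is routine but necessary bookkeeping.
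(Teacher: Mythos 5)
Your proof is correct and takes essentially the same route as the paper: your $W$ is exactly the paper's region $\Delta_a=\psi(a,\overline{c_a}_{+1})\cup \psi(a,\overline{c_a}_{-1})$, and the paper likewise collapses the union over all $c\neq\overline{c_a}$ to these two extreme regions and reduces the lemma to $Ray(p,\overline{c_j})\cap\Delta_a\neq\emptyset$ for $p\notin\Delta_a$ (justified by figures where you use Minkowski sums and recession cones). Two harmless remarks: since rotation by $\pi$ preserves clockwise order, $\overline{c_{a\pm1}}=\overline{c_a}_{\pm1}$ rather than $\overline{c_a}_{\mp1}$, though the set-level conclusion $\overline{c_j}\notin\{\overline{c_a},\overline{c_a}_{+1},\overline{c_a}_{-1}\}$ that you actually need is unaffected; and your feared degenerate case of $U$ strictly wider than a half-plane cannot arise, because closure of $C$ under antipodes forces the arc from $\overline{c_a}_{-1}$ to $\overline{c_a}_{+1}$ through $\overline{c_a}$ to have angle at most $\pi$, with equality handled by your convexity argument.
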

\begin{proof}
Consider any interval $a\in I(e_i)$. Recall that $\psi(a,\overline{c_{a}}_{+1})$ ($\psi(a,\overline{c_{a}}_{-1})$) denotes the region of all the points that can be reached by a path that passes through $a$ and then makes a turn in orientation $\overline{c_a}_{+1}$ ($\overline{c_a}_{-1}$) (see Figure~\ref{fig:delta_region}).
It is easy to see that $\psi(a,c)\subseteq \psi(a,\overline{c_{a}}_{+1})$ for any $c\in \phi(\overline{c_{a}},c_a)$ and $\psi(a,c)\subseteq \psi(a,\overline{c_{a}}_{-1})$ for any $c\in \phi(c_a, \overline{c_{a}})$.
Therefore, $\Delta_a= \psi(a,\overline{c_{a}}_{+1}) \cup  \psi(a,\overline{c_{a}}_{-1})$ is the region of all the points that can be reached by a path that passes through $a$ and then makes a turn in some orientation $c\neq \overline{c_a}$ (see Figure~\ref{fig:delta_region_c}).
\begin{figure}[ht]
\centering
\hspace{-3em}%
\begin{subfigure}{.27\textwidth}
  \includegraphics[scale=0.45]{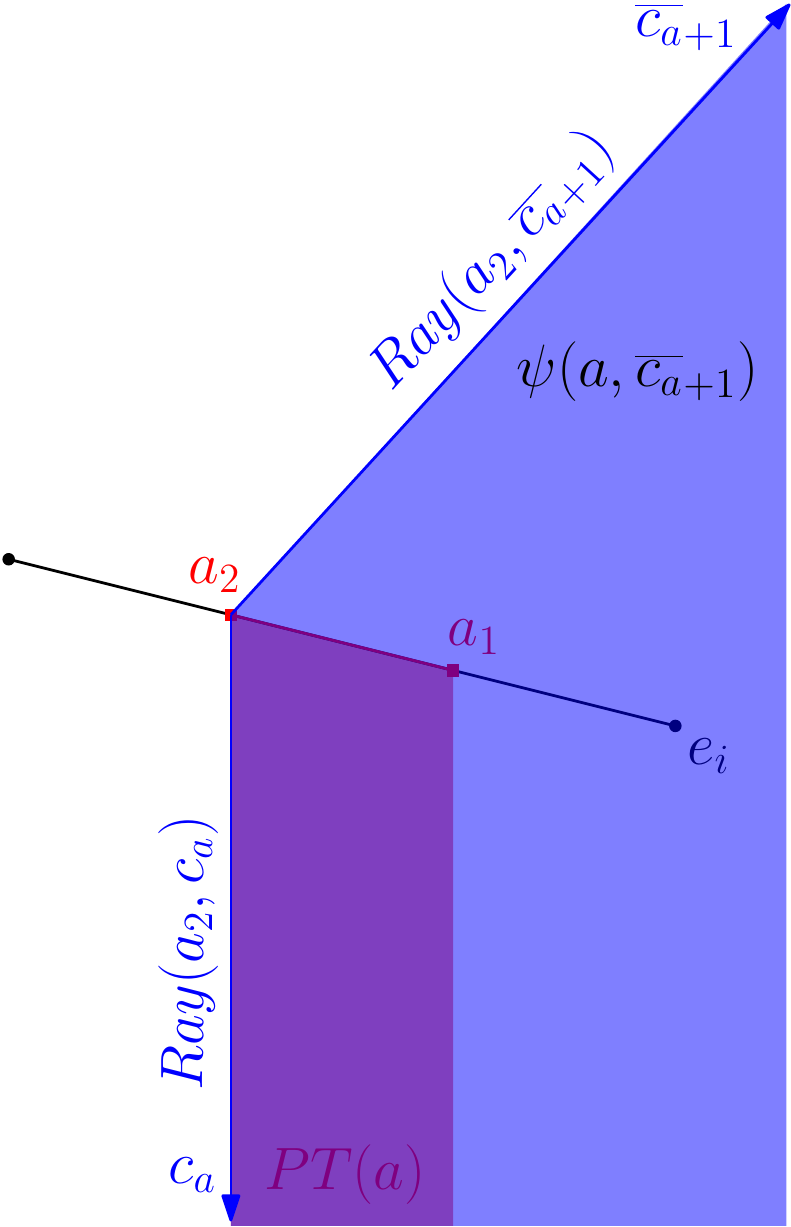}
  \caption{An interval $a$ (red) and the region $\psi(a,\overline{c_{a}}_{+1})$ (blue).}
  \label{fig:delta_region_a}
\end{subfigure}%
 \hspace{3em}%
      \begin{subfigure}{.27\textwidth}
  \includegraphics[scale=0.45]{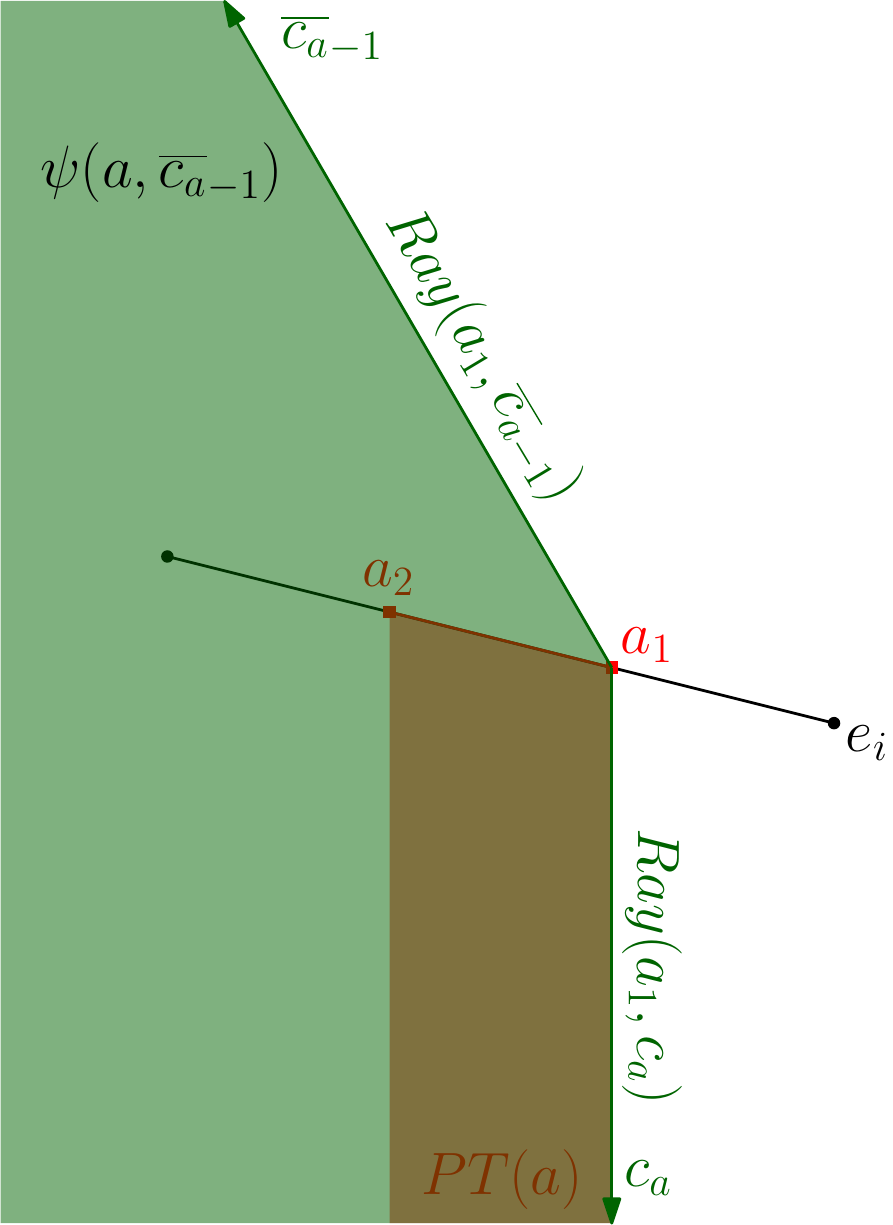}
  \caption{An interval $a$ (red) and the region $\psi(a,\overline{c_{a}}_{-1})$ (green).}
  \label{fig:delta_region_b}
\end{subfigure}%
\hspace{4em}%
\begin{subfigure}{0.35\textwidth}
  \includegraphics[scale=0.45]{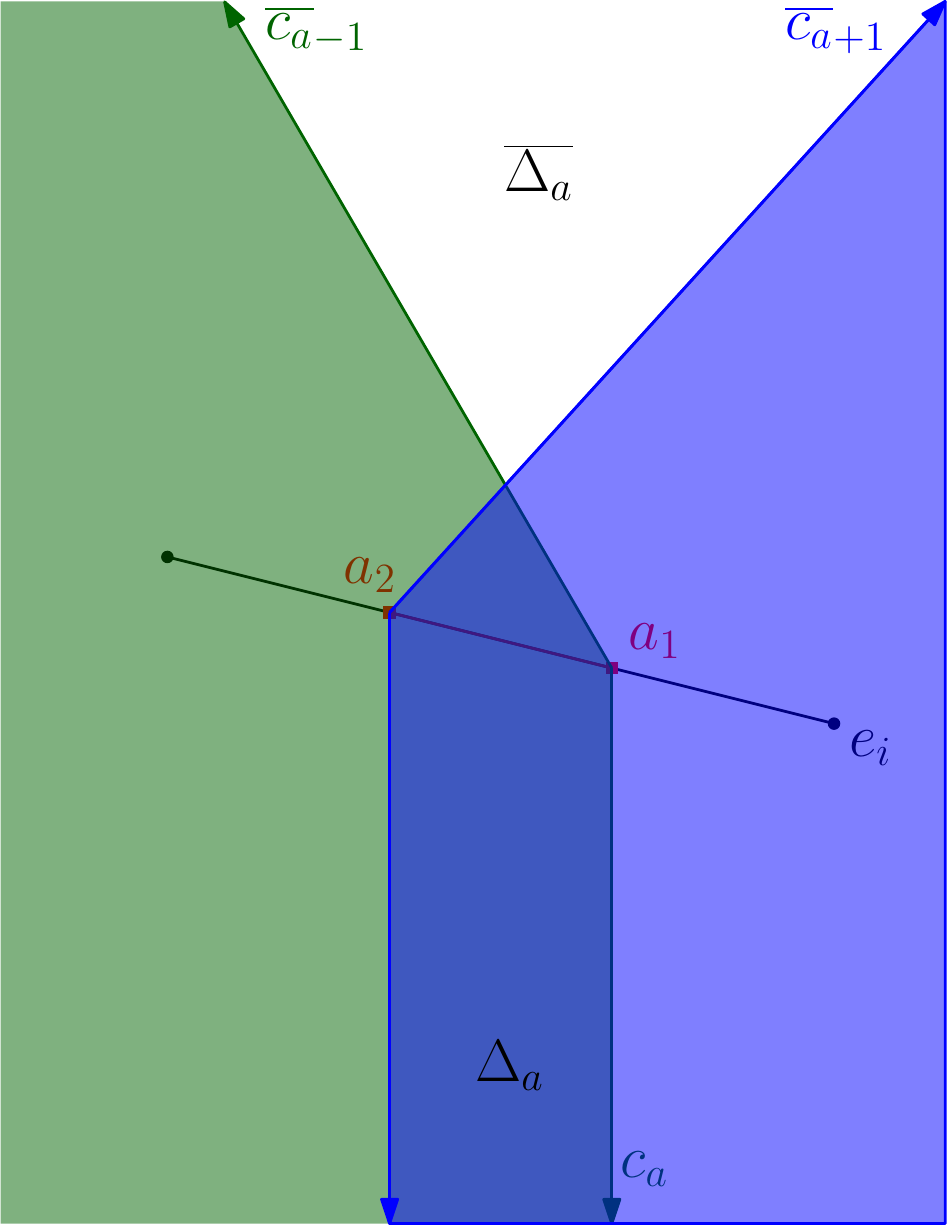}
  \caption{The region $\Delta_a$, which is the union of $\psi(a,\overline{c_{a}}_{+1})$ and $\psi(a,\overline{c_{a}}_{-1})$.}
\label{fig:delta_region_c}
\end{subfigure}%
\caption{All the points that can be reached by a path that passes through $a$ and then makes a turn in some orientation $c\neq \overline{c_a}$.}
\label{fig:delta_region}
\vspace{-5mm}
\end{figure}

Consider any point $p\in \mathbb{R}^2$ and any orientation $c_j \notin \{c_a,c_{a+1},c_{a-1}\}$. If $p\in \Delta_a$, then $p$ can be reached by a path that passes through $a$ and then makes a turn in some orientation $c\neq \overline{c_a}$. By making an additional turn at $p$ in orientation $c_j$ (without extending the path), we obtain a path that reaches $p$ as required.

If $p\in \overline{\Delta_a}= \mathbb{R}^2 \backslash \Delta_a$, then $Ray(p,\overline{c_j}) \cap \Delta_a \neq \emptyset$, since $\overline{c_j} \notin \{\overline{c_{a}}_{-1}, \overline{c_a}, \overline{c_{a}}_{+1}\}$ (as shown in Figure~\ref{fig:fig16}).
Let $p'$ be any point on $Ray(p,\overline{c_j}) \cap \Delta_a$, then $p'$ can be reached by a path that passes through $a$ and then makes a turn in some orientation $c\neq \overline{c_a}$, and by extending this path by adding the link $p'p$, we obtain a path that reaches $p$ as required.
\end{proof}

Consider the region $\Delta_a=\psi(a,\overline{c_{a}}_{+1}) \cup  \psi(a,\overline{c_{a}}_{-1})$ defined in the proof of Lemma~\ref{claim:lemma4.2}. Then, as mentioned in the proof of Lemma~\ref{claim:lemma4.2}, $\Delta_a$ is the region of all the points that can be reached by a path that passes through $a$ and then makes a turn in some orientation $c\neq \overline{c_a}$. In addition, we notice that by extending such a path by adding a link in orientation $c_j$, for $c_j \in \{c_a,c_{a+1},c_{a-1}\}$, we cannot leave $\Delta_a$ (see Figure~\ref{fig:fig17}), since for any point $q \in \Delta_a$, $Ray(q,c_j) \subseteq \Delta_a$.

\begin{figure}[h]
    \begin{minipage}{.4\textwidth}
    \centering
    \includegraphics[scale=0.40]{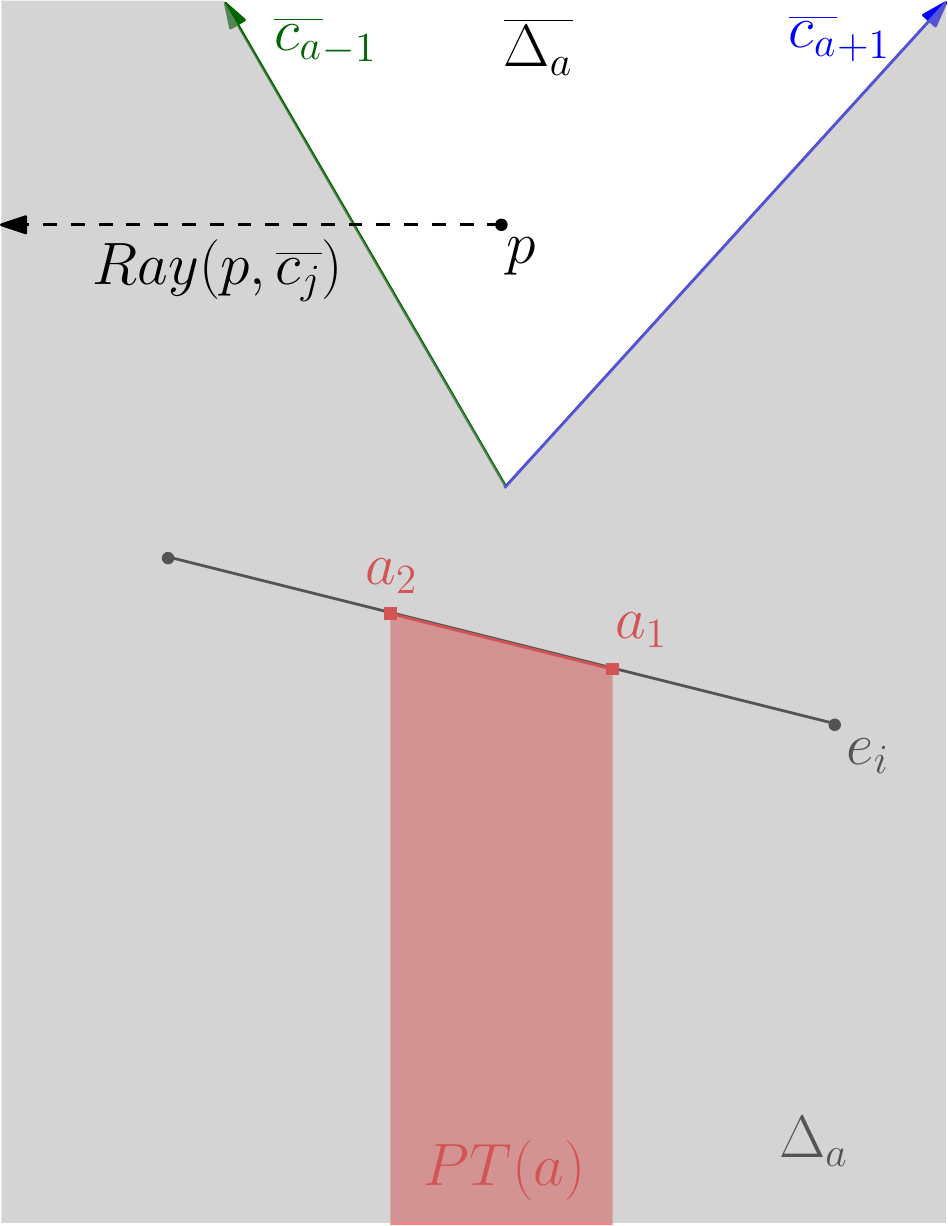}
    \caption{Lemma~\ref{claim:lemma4.2}. If $p\in \overline{\Delta_a}$, then $Ray(p,\overline{c_j}) \cap \Delta_a \neq \emptyset$.}
    \label{fig:fig16}
    \end{minipage}%
     \hspace{5em}%
    \begin{minipage}{.51\textwidth}
    \centering
    \includegraphics[scale=0.40]{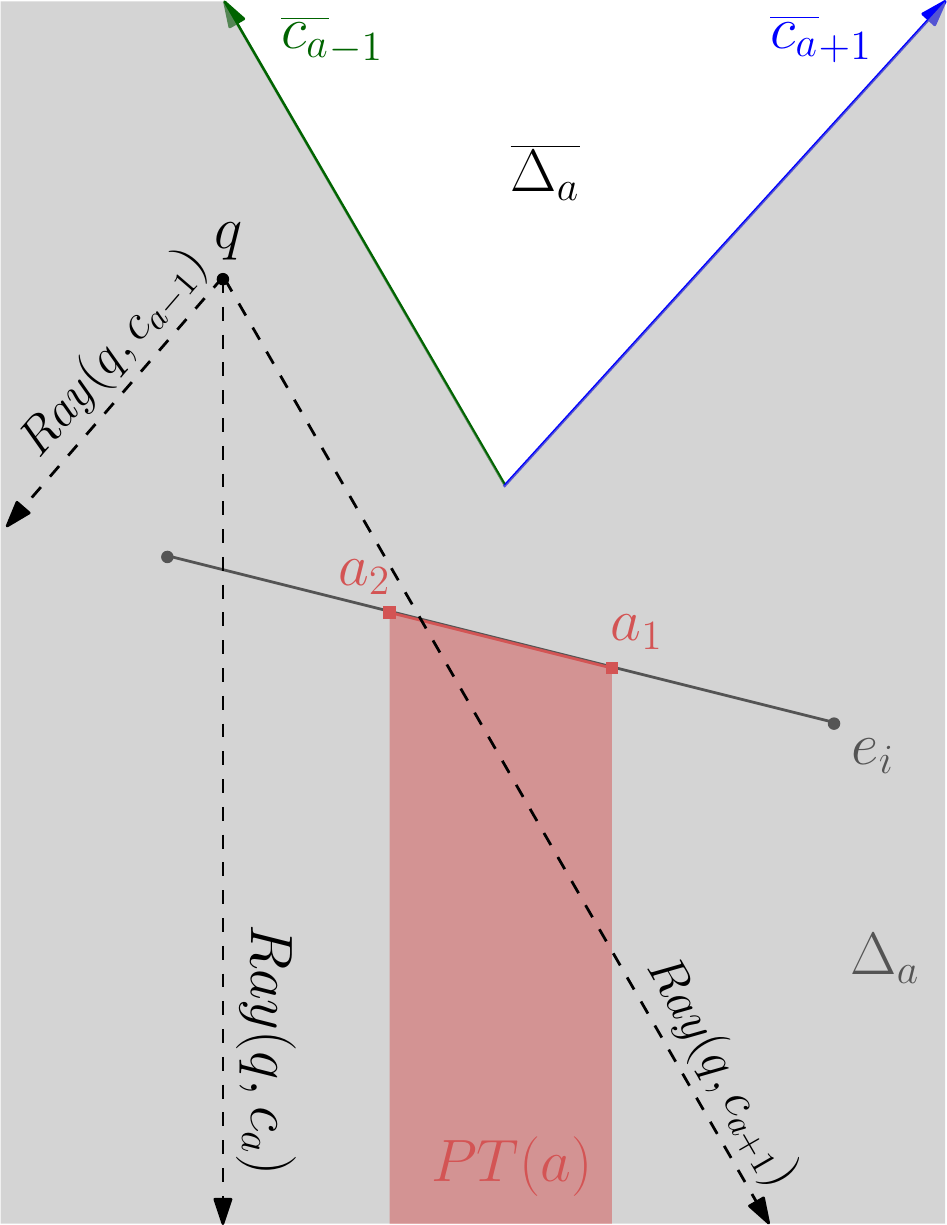}
    \caption{For any $q\in \Delta_a$, $Ray(q,c_j)\cap \overline{\Delta_a} = \emptyset$, for $c_j\in \{c_a, c_{a+1}, c_{a-1}\}$.}
    \label{fig:fig17}
    \end{minipage}%
\vspace{-5mm}
\end{figure}

The following claim bounds the number of intervals with associated length and orientation $l(e_i)+2$ and $c_j$, respectively, that are `created' on $e_{i+1}$.
\begin{my_claim}
\label{claim:claim4}
At most $|I^+(e_i,\overline{c_j})| + 2$ intervals with associated length and orientation $l(e_i)+2$ and $c_j$, respectively, are `created' on $e_{i+1}$, during the execution of the algorithm.
\end{my_claim}
\begin{proof}
The proof can be found in Appendix~\ref{app:claim4}. Here, we only observe that
there are two ways to reach a point on $e_{i+1}$ with a path of length $l(e_i)+2$ whose last link is of orientation $c_j$. The first is by passing through one of the intervals $a\in I(e_i)$ and then making two turns, where the first one is in orientation $c\neq \overline{c_a}$ and the second one is in orientation $c_j$ (see Lemma~\ref{claim:lemma4.1}).
The second way is by passing through one of the intervals in $I^+(e_i)\setminus I^+(e_i,c_j)$, and then making a turn in orientation $c_j$.
That is, the intervals on $e_{i+1}$ with associated length $l(e_i)+2$ and associated orientation $c_j$ are determined by the intervals in $I(e_i) \cup (I^+(e_i)\setminus I^+(e_i,c_j))$.
\end{proof}

The following claim bounds the number of intervals with associated length and orientation $l(e_i)+3$ and $c_j$, respectively, that are `created' on $e_{i+1}$.

\begin{my_claim}
\label{claim:claim5}
For any $q\in e_{i+1}$ and for any $c_j\in C$, there exists a path of length $l(e_i)+3$ from $s$ to $q$, whose last link has orientation $c_j$, for $1\leq i \leq n-1$. 
\end{my_claim}
\begin{proof}
Consider any path $\pi$ from $s$ to $e_i$ that corresponds to an interval in $I(e_i)$, and let $p$ be the point on $e_i$ where $\pi$ ends. Since $C$ spans the plane, there exists a two-link path from $p$ to $q$, and
by making a turn at $q$ in orientation $c_j$ (without extending the path), we obtain a three-link path $\pi_{p,q}$ from $p$ to $q$ whose last link has orientation $c_j$.
So, the path obtained by concatenating the paths $\pi$ and $\pi_{p,q}$ is as desired.
\end{proof}

\begin{my_claim}
\label{claim:claim6}
For any $0 \leq i \leq n+1$ and $c_j \in C$, $|I(e_i,c_j)| \le 2i+1$ and $|I^+(e_i,c_j)| \le 2i+1$.
\end{my_claim}
\begin{proof}
The proof is by induction on $i$.
For $i=0$, the claim is clearly true;  $|I(e_0,c_j)| = |I^+(e_0,c_j)| = 1$.

Assume now that the claim is true for $i$, $0 \le i \le n$, that is, for any $c_j \in C$, we have $|I(e_i,c_j)| \le 2i+1$ and $|I^+(e_i,c_j)| \le 2i+1$. We show below that it remains true for $i+1$.

Recall that $l(e_i) \le l(e_{i+1}) \le l(e_i) + 2$ (Claim~\ref{claim:claim2}). We show that the claim remains true in each of the resulting three cases.  
{\begin{itemize}
    \item {\bf Case A}: $l(e_{i+1})=l(e_i)$. In this case $I(e_{i+1},c_j)$ stores the +0-intervals on $e_{i+1}$ with respect to $I(e_i,c_j)$.
    Since, each interval $a \in I(e_i,c_j)$ `creates' at most one +0-interval on $e_{i+1}$, we get that $|I(e_{i+1},c_j)| \le |I(e_i,c_j)| \le 2i + 1$. 

    Recall that $I^+(e_{i+1},c_j)$ is the set of maximal intervals on $e_{i+1}$ formed by all paths of length $l(e_{i+1})+1 = l(e_i)+ 1$, whose last link has orientation $c_j$. By Claim~\ref{claim:claim3},  $|I^+(e_{i+1},c_j)| \le \max\{ |I(e_i,\overline{c_j})|,|I^+(e_i,c_j)|\}+2$, and therefore $|I^+(e_{i+1},c_j)| \le \max\{ 2i+1,2i+1\}+2 = 2(i+1)+1$.

    \item {\bf Case B:} $l(e_{i+1})=l(e_i)+1$. In this case, $I(e_{i+1},c_j)$
    is the set of maximal intervals on $e_{i+1}$ formed by all paths of length $l(e_{i+1}) = l(e_i)+ 1$, whose last link has orientation $c_j$. By Claim~\ref{claim:claim3},  $|I(e_{i+1},c_j)| \le \max\{ |I(e_i,\overline{c_j})|,|I^+(e_i,c_j)|\}+2$, so,  $|I(e_{i+1},c_j)| \le \max\{ 2i+1,2i+1\}+2 = 2(i+1)+1$. 
    
    Now, $I^+(e_{i+1},c_j)$ is the set of maximal intervals on $e_{i+1}$ formed by all paths of length $l(e_{i+1})+1 = l(e_i)+ 2$, whose last link has orientation $c_j$. By Claim~\ref{claim:claim4}, $|I^+(e_{i+1},c_j)| \le |I^+(e_i,\overline{c_j})| + 2$, and therefore $|I^+(e_{i+1},c_j)| \le 2i+1+2 = 2(i+1)+1$.
    
    \item {\bf Case C:} $l(e_{i+1})=l(e_i)+2$. In this case, $I(e_{i+1},c_j)$
    is the set of maximal intervals on $e_{i+1}$ formed by all paths of length $l(e_{i+1}) = l(e_i)+ 2$, whose last link has orientation $c_j$. Thus, by Claim~\ref{claim:claim4}, $|I(e_{i+1},c_j)| \le |I^+(e_i,\overline{c_j})| + 2 \le 2(i+1)+1$.
    Moreover, in this case, $I^+(e_{i+1},c_j) = \{e_{i+1}\}$, so $|I^+(e_{i+1},c_j)| = 1$. 
\end{itemize}}
\end{proof}

\noindent
{\bf Running time.}
We bound the running time of each of the two stages of our algorithm.
Consider the $i$'th iteration of the main loop of Stage~I. We need $O(|I(e_i,c_j)|+|I^+(e_i,c_j)|)$ time to compute the $+0$-intervals on $e_{i+1}$, $O(|I(e_i) \setminus I(e_i,c_j)|+|I^+(e_i) \setminus I^+(e_i,c_j)|)$ time to compute the $+1$-intervals, and $O(|I(e_i)|)$ time to compute the $+2$-intervals. 
Since we perform this calculation for each $c_j\in C$, the running time of the i'th iteration is $O(|C| \cdot \{|I(e_i)|+|I^+(e_i)|\})$. 
By Claim~\ref{claim:claim6} we conclude that $|I(e_i)| = O(|C| \cdot (2i+1))$ and $|I^+(e_i)| = O(|C| \cdot (2i+1))$, for $1\leq i \leq n+1$.
Therefore, the running time of Stage~I is $\sum_{i=1}^{n+1} O(|C| \cdot |C| \cdot (2i+1)) = O(|C|^2 \cdot n^2)$.

In stage 2, we run Algorithm~\ref{alg:Recovery} for each $i$ from $n + 1$ to $1$. The running time of Algorithm~\ref{alg:Recovery} is $O(|I(e_{i-1})|+|I^+(e_{i-1})|)$, and by Claim~\ref{claim:claim6} we get $O(|C| \cdot i)$. Therefore, the running time of Stage~II is $\sum_{i=1}^{n+1} O(|C| \cdot i) = O(|C| \cdot n^2)$.

Thus, the overall running time of the algorithm is $O(|C|^2 \cdot n^2)$, as summarized:

\begin{theorem}
Given a set $E$ of $n$ disjoint $C$-oriented segments in the plane and points $s$ and $t$ that do not belong to any of the segments in $E$, one can compute a minimum-link $C$-oriented tour of $E$ in $O(|C|^2 \cdot n^2)$ time.
\end{theorem}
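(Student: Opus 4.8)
The plan is to assemble the lemmas and claims proved above into a proof of correctness of the two-stage algorithm, and then to read off the running time from Claim~\ref{claim:claim6}. Correctness splits naturally along the two stages.

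For Stage~I, I would argue by induction on $i$ that, at the end of the $i$'th iteration, the algorithm has correctly computed $l(e_i)$ and, for every $c_j \in C$, the sets $I(e_i,c_j)$ and $I^+(e_i,c_j)$ (each being the maximal-interval decomposition of the points on $e_i$ reachable by a visiting path of length $l(e_i)$, resp.\ $l(e_i)+1$, whose last link has orientation $c_j$). The base case is the initialization for $e_0=s$. For the inductive step, Claim~\ref{claim:claim2} guarantees $l(e_{i+1})\in\{l(e_i),l(e_i)+1,l(e_i)+2\}$, so there are only three cases, and the algorithm resolves $l(e_{i+1})$ by trying them in increasing order. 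The heart of the step is to check that the $+0$-, $+1$-, and $+2$-interval constructions together enumerate \emph{every} way to extend a path ending on $e_i$ (which by Claim~\ref{claim:claim1} we may assume corresponds to an interval of $I(e_i)\cup I^+(e_i)$) to a path ending on $e_{i+1}$ of the appropriate length: a length-$l(e_i)$ portion in $I(e_i)$ is extended by $0$, $1$, or $2$ links, while a length-$(l(e_i)+1)$ portion in $I^+(e_i)$ is extended by $0$ or $1$ link. Here I would cite Lemma~\ref{claim:lemma4.1} to discard the $+2$ extensions whose first turn is $\overline{c_a}$, and Lemma~\ref{claim:lemma4.2} together with the observation following it (that $\Delta_a=\psi(a,\overline{c_{a}}_{+1})\cup\psi(a,\overline{c_{a}}_{-1})$ captures all two-turn reachable points whose final orientation lies in $\{c_a,c_{a+1},c_{a-1}\}$) to justify the $+2$ construction, and Claim~\ref{claim:claim5} to justify setting $I^+(e_{i+1},c_j)=e_{i+1}$ once $l(e_{i+1})=l(e_i)+2$. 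Finally, the merging of overlapping intervals yields the required maximal decomposition.

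For Stage~II, Claim~\ref{claim:claim1} guarantees the existence of a minimum-link tour whose portion from $s$ to each $e_i$ corresponds to an interval in $I(e_i)\cup I^+(e_i)$; treating $t=e_{n+1}$, the value $l(e_{n+1})$ is exactly the optimum. I would then verify that the backward reconstruction (Algorithm~\ref{alg:Recovery}) is well defined: starting from the interval on $e_{n+1}$ that attains $l(e_{n+1})$ and repeatedly finding, for the current interval $a$ on $e_i$, a predecessor interval $b\in I(e_{i-1})\cup I^+(e_{i-1})$ together with a connecting sub-path of exactly $l_a-l_b+1$ links, the existence of such a $b$ being precisely what the Stage~I constructions certify. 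Concatenating these sub-paths produces a $C$-oriented path that visits $e_1,\ldots,e_n$ in order and whose total number of links telescopes to $l(e_{n+1})$; removing the redundant (non-turning) vertices then yields the final minimum-link tour.

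The running time is then immediate: by Claim~\ref{claim:claim6} each $|I(e_i,c_j)|$ and $|I^+(e_i,c_j)|$ is $O(i)$, hence $|I(e_i)|,|I^+(e_i)|=O(|C|\cdot i)$; since the $i$'th iteration of Stage~I costs $O(|C|\cdot(|I(e_i)|+|I^+(e_i)|))$ it sums to $\sum_{i=1}^{n+1}O(|C|^2\cdot i)=O(|C|^2\cdot n^2)$, and Stage~II costs $\sum_i O(|C|\cdot i)=O(|C|\cdot n^2)$. I expect the main obstacle to be the completeness part of the Stage~I inductive step, namely proving that the three interval-creation rules, as pruned by Lemmas~\ref{claim:lemma4.1} and~\ref{claim:lemma4.2}, miss no reachable interval and produce exactly the maximal decompositions $I(e_{i+1})$ and $I^+(e_{i+1})$ without double counting across the three length cases; the bounding claims (Claims~\ref{claim:claim3} and~\ref{claim:claim4}) control the sizes, but the exact-set identity requires the careful region arguments sketched in those proofs.
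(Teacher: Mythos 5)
Your proposal is correct and takes essentially the same route as the paper: correctness assembled from Claims~\ref{claim:claim1}, \ref{claim:claim2}, and~\ref{claim:claim5} together with Lemmas~\ref{claim:lemma4.1} and~\ref{claim:lemma4.2} for Stage~I and Algorithm~\ref{alg:Recovery} for Stage~II, with the running time read off from Claim~\ref{claim:claim6} via the identical per-iteration bounds $O(|C|\cdot(|I(e_i)|+|I^+(e_i)|))$ summing to $O(|C|^2\cdot n^2)$ and $O(|C|\cdot n^2)$ for the two stages. Your flagged concern about the exact-set completeness of the Stage~I inductive step is legitimate, but the paper leaves that point at the same level of detail, so your argument matches the paper's own proof.
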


\vspace{-3mm}
\section{Extensions}
In the case that $|C|=4$ (e.g., axis-parallel paths and segments), the specialization of our analysis shows a constant upper bound on the number of intervals on each segment; this results in overall time $O(n)$. Also, our analysis only required that consecutive segments in $E$ do not intersect each other; they can otherwise intersect.  In ongoing and future work we consider more general polygonal regions, possibly overlapping arbitrarily. We also consider query versions of the problem in which we build data structures (shortest path maps) that allow link distance queries on subsequences of the input set of regions, between query points in the plane.  Future work might examine problems in 3D.

\subsubsection*{Acknowledgements}
M. Katz was partially supported by the US-Israel Binational Science Foundation (BSF project 2019715 / NSF CCF-2008551).
J. Mitchell was partially supported by the National Science Foundation (CCF-2007275) and the US-Israel Binational Science Foundation (BSF project 2016116).

\bibliographystyle{plain}

\newpage
\appendix

\section{Details}
\subsection{Stage II}
\label{app:stage2}

In this stage we use the information collected in the first stage to construct a minimum-link tour $\pi$ of $E$.

We construct $\pi$ incrementally beginning at $t$ and ending at $s$. That is, in the first iteration we add the portion of $\pi$ from $t$ to $e_n$, in the second iteration we add the portion from $e_n$ to $e_{n-1}$, etc. 
Assume that we have already constructed the portion of $\pi$ from $t$ to $e_i$, where this portion ends at point $p$ of interval $a$ on $e_i$. We describe in Algorithm~\ref{alg:Recovery} how to compute the portion from $e_i$ to $e_{i-1}$, which begins at point $p$ of $a$ and ends at a point $p'$ of interval $b$ on $e_{i-1}$ (where $b \in I(e_{i-1}) \cup I^+(e_{i-1})$) and consists of $l_a-l_b+1$ links. 
Before continuing to the next iteration, we set $p=p'$ and $a=b$.

After adding the last portion, which ends at $s$, we remove all the redundant vertices from $\pi$, i.e., vertices at which $\pi$ does not make a turn.

\begin{algorithm}
\SetKwInOut{Input}{Input}
\SetKwInOut{Output}{Output}
\Input{Index $i$, an interval $a$ on $e_i$, a point $p \in a$}
\Output{An interval $b$ on $e_{i-1}$, a point $p'\in b$ and a $C$-oriented path $\pi_{p,p'}$ from $p$ to $p'$ consisting of $l_a-l_b+1$ links}
\BlankLine
\Begin{
\If {$l_a = l(e_{i-1})$}{
    $p' \gets Ray(p,\overline{c_a})\cap e_{i-1}$ and $\pi_{p,p'}\gets (p,p')$\;
    $b\gets$ the interval of $I(e_{i-1},c_a)$ containing $p'$\;
    \textbf{Stop and return} $p', b$ and $\pi_{p,p'}$\;
}
\If{$l_a = l(e_{i-1})+1$}{
    \If{$Ray(p,\overline{c_a})\cap e_{i-1}\neq \emptyset$}{
       $p'\gets Ray(p,\overline{c_a})\cap e_{i-1}$ and $\pi_{p,p'}\gets (p,p')$\;
       $b\gets$ the interval of $I^+(e_{i-1},c_a)$ containing $p'$\;
       \textbf{Stop and return} $p', b$ and $\pi_{p,p'}$\;
    }
    \Else{
       \ForEach{$d\in I(e_{i-1})\backslash I(e_{i-1},c_a)$}{
          \If{$Ray(p,\overline{c_a})\cap PT(d)\neq \emptyset$}{
             $q\gets$ any point in $Ray(p,\overline{c_a})\cap PT(d)$\;
			 $p'\gets Ray(q,\overline{c_d})\cap e_{i-1}$\;
			 $\pi_{p,p'}\gets (p,q,p')$ and $b\gets d$\;
			 \textbf{Stop and return} $p', b$ and $\pi_{p,p'}$\;
		   }
		}
	}
}
\If{$l_a$ = $l(e_{i-1})+3$}{
   $a \gets$ any interval in $I(e_i)$ that contains $p$\;
}
\If{$l_a = l(e_{i-1})+2$}{
   \ForEach{$d\in I^+(e_{i-1})\backslash I^+(e_{i-1},c_a)$}{
      \If{$Ray(p,\overline{c_a})\cap PT(d)\neq \emptyset$}{
         $q \gets$ any point in $Ray(p,\overline{c_a})\cap PT(d)$\;
         $p'\gets Ray(q,\overline{c_d})\cap e_{i-1}$\;
         $\pi_{p,p'}\gets (p,q,p')$ and $b\gets d$\;
         \textbf{Stop and return} $p', b$ and $\pi_{p,p'}$\;
      }
    }
    \ForEach{$d\in I(e_{i-1})$}{
       \If{$Ray(p,\overline{c_a})\cap (\psi(d,\overline{c_{d}}_{+1}) \cup \psi(d,\overline{c_{d}}_{-1})) \neq \emptyset$}{
          $v\gets$ any point in $Ray(p,\overline{c_a})\cap (\psi(d,\overline{c_{d}}_{+1}) \cup \psi(d,\overline{c_{d}}_{-1}))$\;
          \If{$Ray(v,c_{d+1})\cap PT(d)\neq \emptyset$}{
             $q\gets$ any point in $Ray(v,c_{d+1})\cap PT(d)$\;
             $p'\gets Ray(q,\overline{c_d})\cap e_{i-1}$, $b\gets d$\;
          }
          \Else{
 			$q\gets$ any point in $Ray(v,c_{d-1})\cap PT(d)$\;
			$p'\gets Ray(q,\overline{c_d})\cap e_{i-1}$, $b\gets d$\;
		  }
		  $\pi_{p,p'}\gets (p,v,q,p')$\;
		  \textbf{Stop and return} $p', b$ and $\pi_{p,p'}$\;
	    }
	}
 }
}
\caption{SubPath Recovery}
\label{alg:Recovery}
\end{algorithm}

\subsection{Proof of Claim~\ref{claim:claim3}: For any interval $h\in I(e_i,\overline{c_j}$), $PT(h)\cap e_{i+1} = \emptyset$}
\label{app:second_case}

{\bf If Case A:}
If $e_{i+1}$ intersects at most one of the chains $l_A$ and $l_D$, then, as in Case~A above, at most two intervals are formed on $e_{i+1}$, where each of them contains an endpoint of $e_{i+1}$ (see Figure~\ref{fig:e_intersect_R_a}).

If, however, $e_{i+1}$ intersects both $l_A$ and $l_D$ (which is possible since now $e_{i+1}$ may intersect $PT(b)$), then it intersects each of them exactly once (otherwise, we get that $e_{i+1} \cap b \neq \emptyset$), and a single interval is formed on $e_{i+1}$, which does not contain an endpoint of $e_{i+1}$ (see Figure~\ref{fig:intersection_R_a}).

{\bf If Case B:}
If $e_{i+1}$ intersects $l_B$ and $l_D$ in more than two points (in total), then it must also intersect $b$, which is impossible. Therefore,  
$e_{i+1}$ can intersect $l_B$ and $l_D$ in at most two points (in total), and at most two intervals are formed on $e_{i+1}$, where each of them contains an endpoint of $e_{i+1}$ (see Figure~\ref{fig:intersection_R_b}).

{\bf If Case C:}
Since Case~B and Case~C are symmetric, at most two intervals are formed on $e_{i+1}$, each of which contains an endpoint of $e_{i+1}$.

{\bf If Case D:} Identical to Case~D above, that is, $e_{i+1}$ intersects $l$ either at one or at two points, and at most two intervals are formed on $e_{i+1}$, each of which contains an endpoint of $e_{i+1}$ (see Figure~\ref{fig:e_intersect_R_b}).

We have shown that by passing through an interval in $\{b\} \cup \delta(b) \setminus I(e_i,\overline{c_j})$, at most one interval that does not contain an endpoint of $e_{i+1}$ is formed on $e_{i+1}$, or at most two intervals (with associated length $l(e_i)+1$ and orientation $c_j$) are formed on $e_{i+1}$, where each of them contains an endpoint of $e_{i+1}$.

Therefore, the total number of such intervals that are formed on $e_{i+1}$, by passing through an interval in $\bigcup_{b\in I^+(e_i,c_j)} \{b\} \cup \delta(b) \setminus I(e_i,\overline{c_j})$ is at most $|I^+(e_i,c_j)|+2$. (For each endpoint $p$ of $e_{i+1}$, we retain only the longest interval with $p$ as one of its endpoints.) 

Finally, observe that by passing through an interval $h \in I(e_i,\overline{c_j})$ and turning backwards in orientation $c_j$, at most one interval is formed on $e_{i+1}$. However, since $PT(h) \cap e_{i+1} = \emptyset$, the interval formed on $e_{i+1}$ is contained in $PT(b) \cap e_{i+1}$ and will therefore not appear in the returned set of intervals formed on $e_{i+1}$. 

We conclude that at most $|I^+(e_i,c_j)|+2$ intervals (with associated length $l(e_i)+1$ and orientation $c_j$) are formed on $e_{i+1}$ during the execution of the algorithm (in the case that for any interval $h\in I(e_i,\overline{c_j}$), $PT(h)\cap e_{i+1} = \emptyset$).

\subsection{Proof of Claim~\ref{claim:claim4}}
\label{app:claim4}
\begin{proof}
There are two ways to reach a point on $e_{i+1}$ with a path of length $l(e_i)+2$ whose last link is of orientation $c_j$. The first is by passing through one of the intervals $a\in I(e_i)$ and then making two turns, where the first one is in orientation $c\neq \overline{c_a}$ and the second one is in orientation $c_j$ (see Lemma~\ref{claim:lemma4.1}).
The second way is by passing through one of the intervals in $I^+(e_i)\setminus I^+(e_i,c_j)$, and then making a turn in orientation $c_j$.
That is, the intervals on $e_{i+1}$ with associated length $l(e_i)+2$ and associated orientation $c_j$ are determined by the intervals in $I(e_i) \cup (I^+(e_i)\setminus I^+(e_i,c_j))$. 

First, if there is an interval $a\in I(e_i)$ such that $c_j \notin \{c_{a-1}, c_a, c_{a+1}\}$, then, by Lemma~\ref{claim:lemma4.2}, $R_a=\mathbb{R}^2$, where $R_a$ is the region of all the points that can be reached by a path that passes through $a$ and then makes two turns, where the first is in orientation $c \neq \overline{c_a}$ and the second is in orientation $c_j$. Thus, $R_a \cap e_{i+1} = e_{i+1}$ and only one interval is formed on $e_{i+1}$.

Assume therefore that for each interval $a \in I(e_i)$, we have $c_j \in \{c_{a-1}, c_a, c_{a+1}\}$.
Now, fix an interval $a \in I(e_i)$, then $R_a = \Delta_a$ (as observed in the paragraph following Lemma~\ref{claim:lemma4.2}).
Consider an interval $b\in I^+(e_i) \setminus \{ I^+(e_i, \overline{c_j})\cup I^+(e_i,c_j) \}$ (we will treat the case $b \in I^+(e_i, \overline{c_j})$ separately), and set $R_{a\cup b}= R_a \cup \psi(b,c_j)$. Then $R_{a\cup b}\cap e_{i+1}$ gives us the intervals on $e_{i+1}$ with length $l(e_i)+2$ and orientation $c_j$, which are formed by passing through $a$ or $b$.

Next, we characterize the region $R_{a\cup b}$, by considering several cases.

{\bf Case A: $PT(b) \subseteq \Delta_a$} (see Figure~\ref{fig:delta_b_case_a}).\\
 Since for any $q \in \Delta_a$, $Ray(q,c_j) \subseteq \Delta_a$, we have  $\psi(b,c_j) = \bigcup_{q \in PT(b)} Ray(q,c_j) \subseteq \Delta_a$, and therefore $R_{a\cup b}=\Delta_a$.

{\bf Case B: $PT(b) \not\subseteq \Delta_a$ and $c_b \notin \{\overline{c_a}_{-1},\overline{c_a},\overline{c_a}_{+1}\}$} (see Figure~\ref{fig:delta_b_case_b}).\\
In this case, $R_{a\cup b}$ is the union  of $\Delta_a$ and the triangle that is formed by the rays defining $\Delta_a$ and the appropriate side of $PT(b)$; see the yellow triangle denoted $\Delta_b$ in Figure~\ref{fig:delta_b_case_b}.

\begin{figure}[!h]
\begin{subfigure}{.5\textwidth}
\centering
  \includegraphics[scale=0.55]{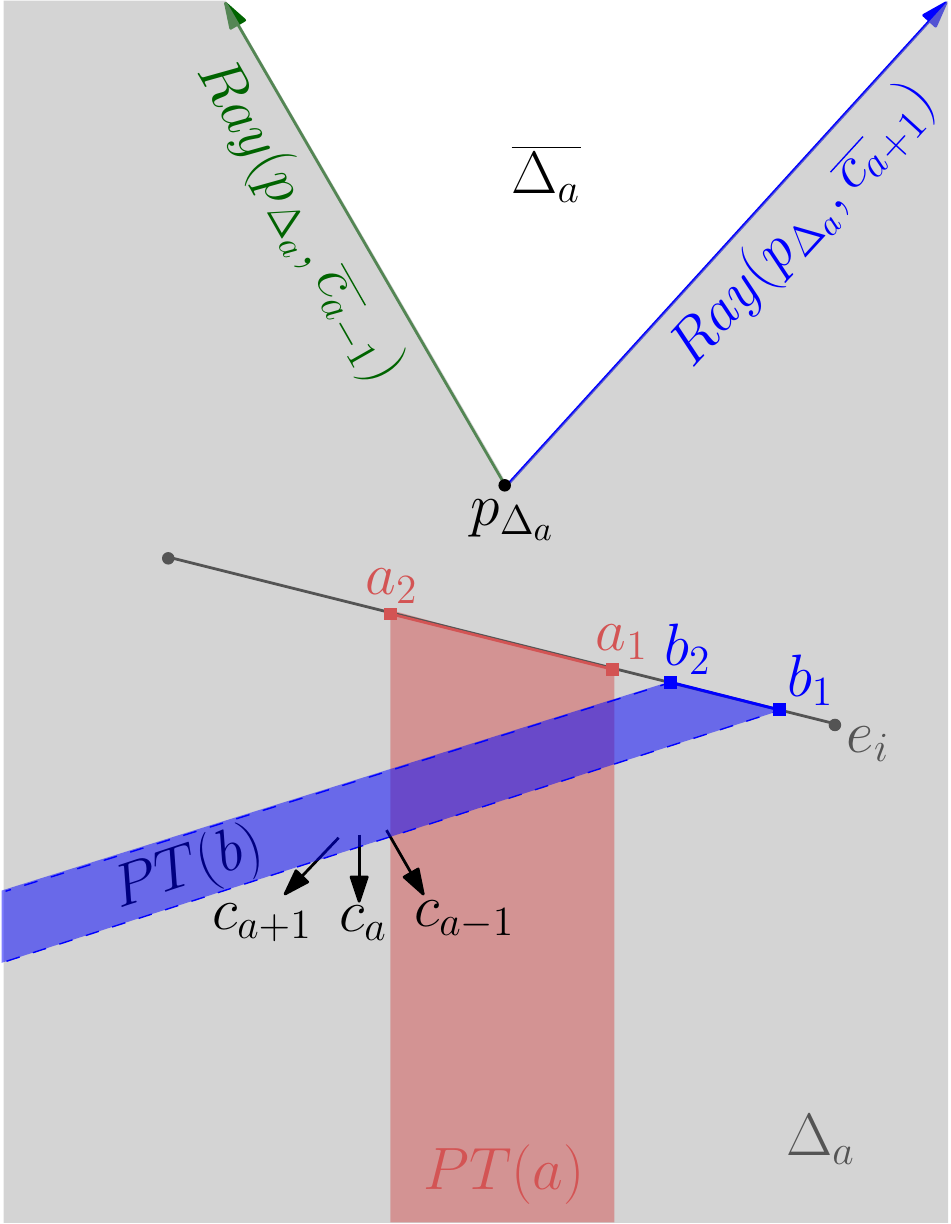}
  \caption{Case A.}
  \label{fig:delta_b_case_a}
\end{subfigure}%
 \hspace{2em}%
      \begin{subfigure}{.5\textwidth}
      \centering
  \includegraphics[scale=0.55]{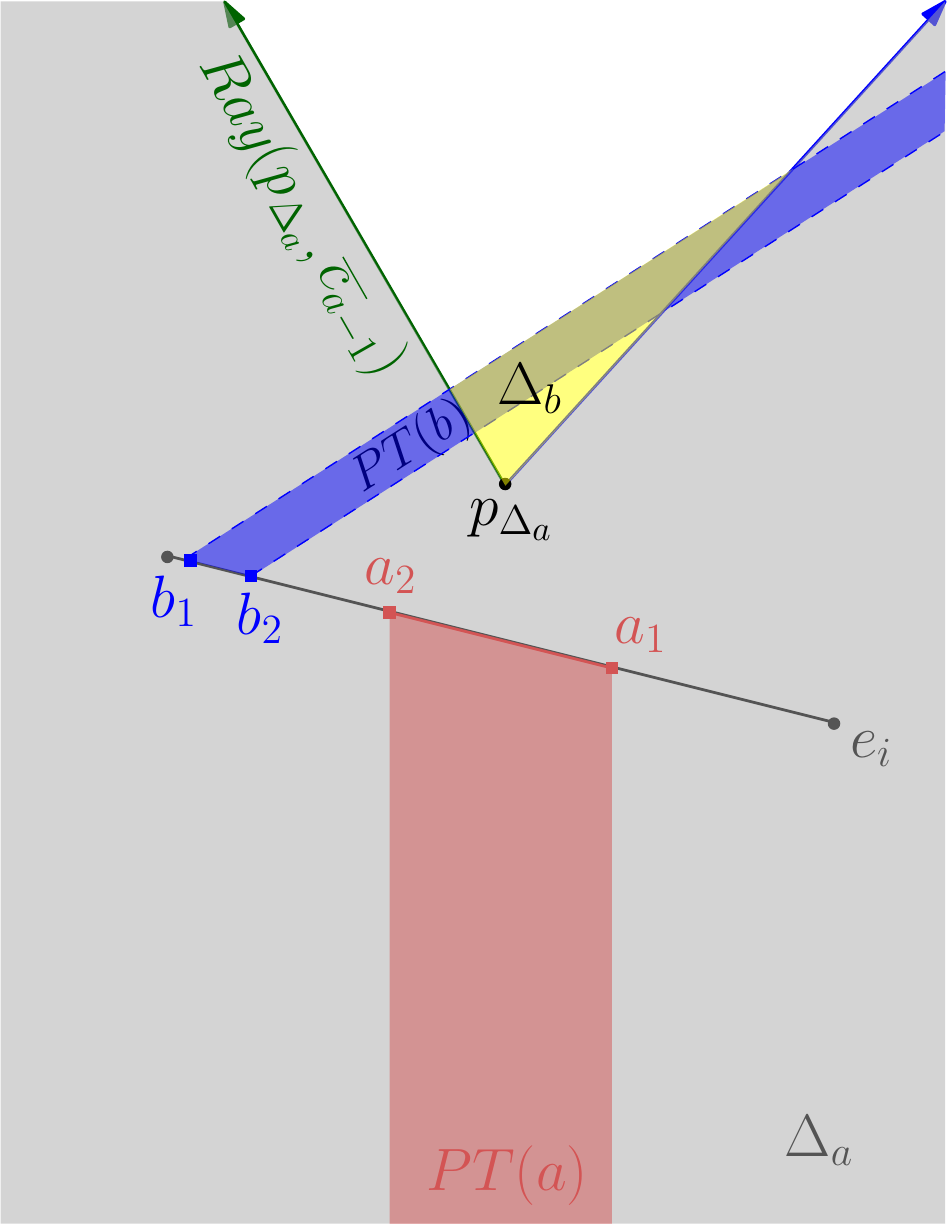}
  \caption{Case B.}
  \label{fig:delta_b_case_b}
\end{subfigure}%
\caption{$R_{a \cup b}$ in Cases A and B.}
\label{fig:delta_b}
\end{figure}

{\bf Case C: $PT(b) \not\subseteq \Delta_a$ and $c_b \in \{\overline{c_a}_{-1},\overline{c_a},\overline{c_a}_{+1}\}$} (see Figure~\ref{fig:delta_b_case_c}).\\
Assume, e.g., that $c_b=\overline{c_a}_{+1}$. Then, $c_j\in \{c_{a-1}, c_a\}$ (since $b\notin I^+(e_i, \overline{c_j})$). In this case, $R_{a\cup b}$ is the union of $\Delta_a$ and the `trapezoid' that is formed by the rays defining $\Delta_a$ and the appropriate side of $PT(b)$; see the yellow region $\Delta_b$ in Figure~\ref{fig:delta_b_case_c1}.
The subcases $c_b=\overline{c_a}_{-1}$ and $c_b=\overline{c_a}$ are similar and are depicted in Figure~\ref{fig:delta_b_case_c2} and Figures~\ref{fig:delta_b_case_c3} and~\ref{fig:delta_b_case_c4}, respectively. 

\begin{figure}[!h]
\begin{subfigure}{.5\textwidth}
\centering
  \includegraphics[scale=0.55]{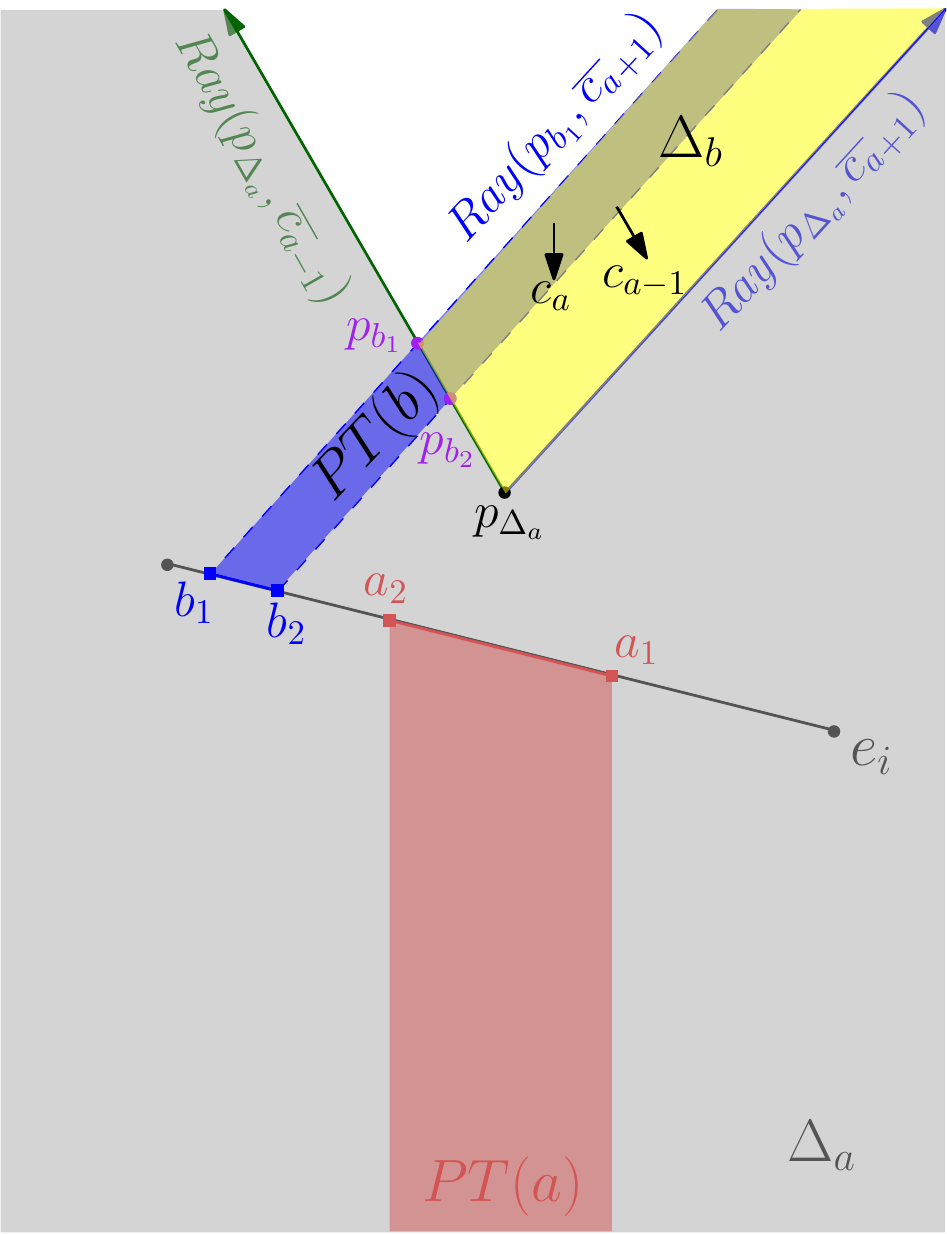}
  \caption{$c_b=\overline{c_a}_{+1}$ and $c_j\in \{c_{a-1}, c_a\}$.}
  \label{fig:delta_b_case_c1}
\end{subfigure}%
 \hspace{2em}%
      \begin{subfigure}{.5\textwidth}
      \centering
  \includegraphics[scale=0.55]{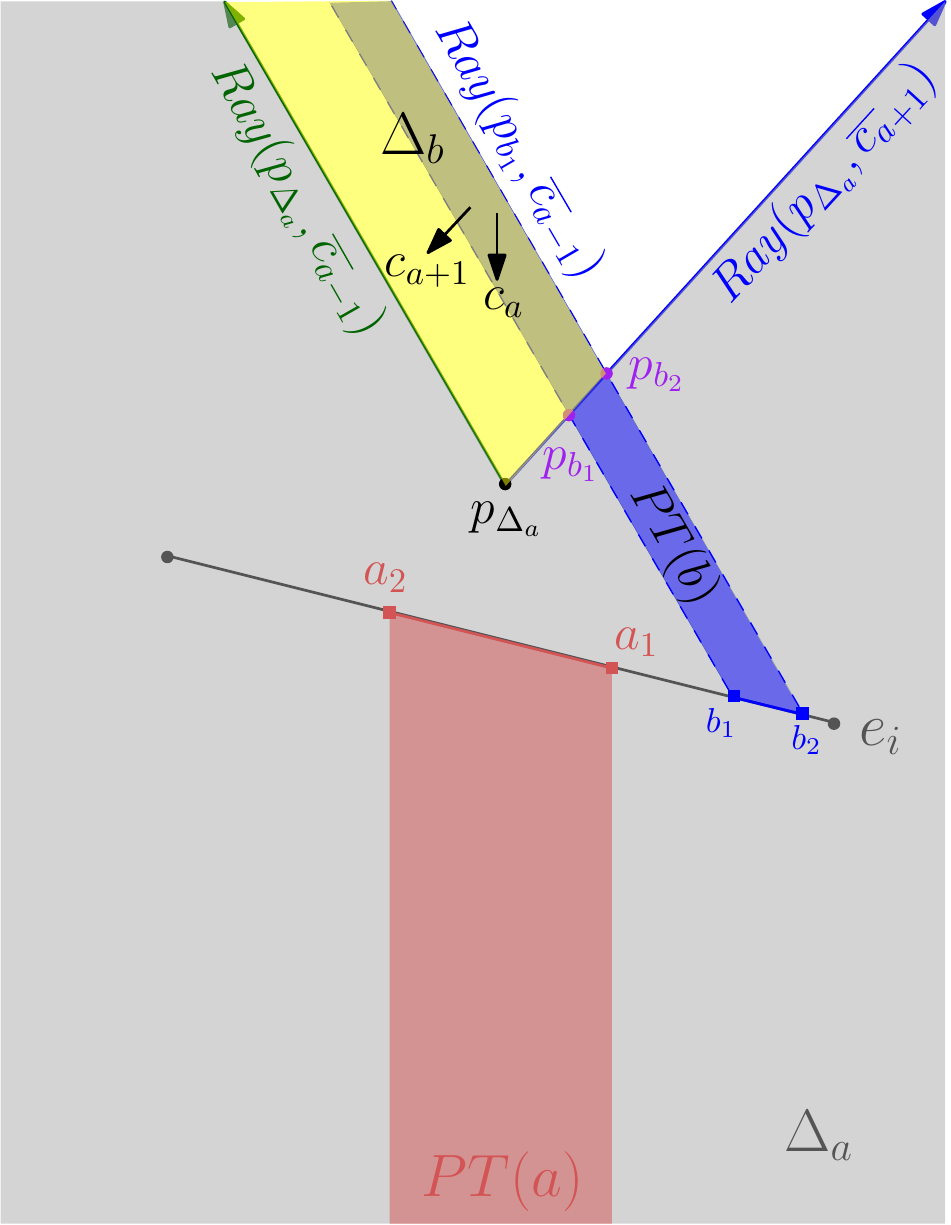}
  \caption{$c_b=\overline{c_a}_{-1}$ and $c_j\in \{c_{a+1}, c_a\}$.}
  \label{fig:delta_b_case_c2}
\end{subfigure}%
\newline
\newline
\newline
\begin{subfigure}{.5\textwidth}
\centering
  \includegraphics[scale=0.55]{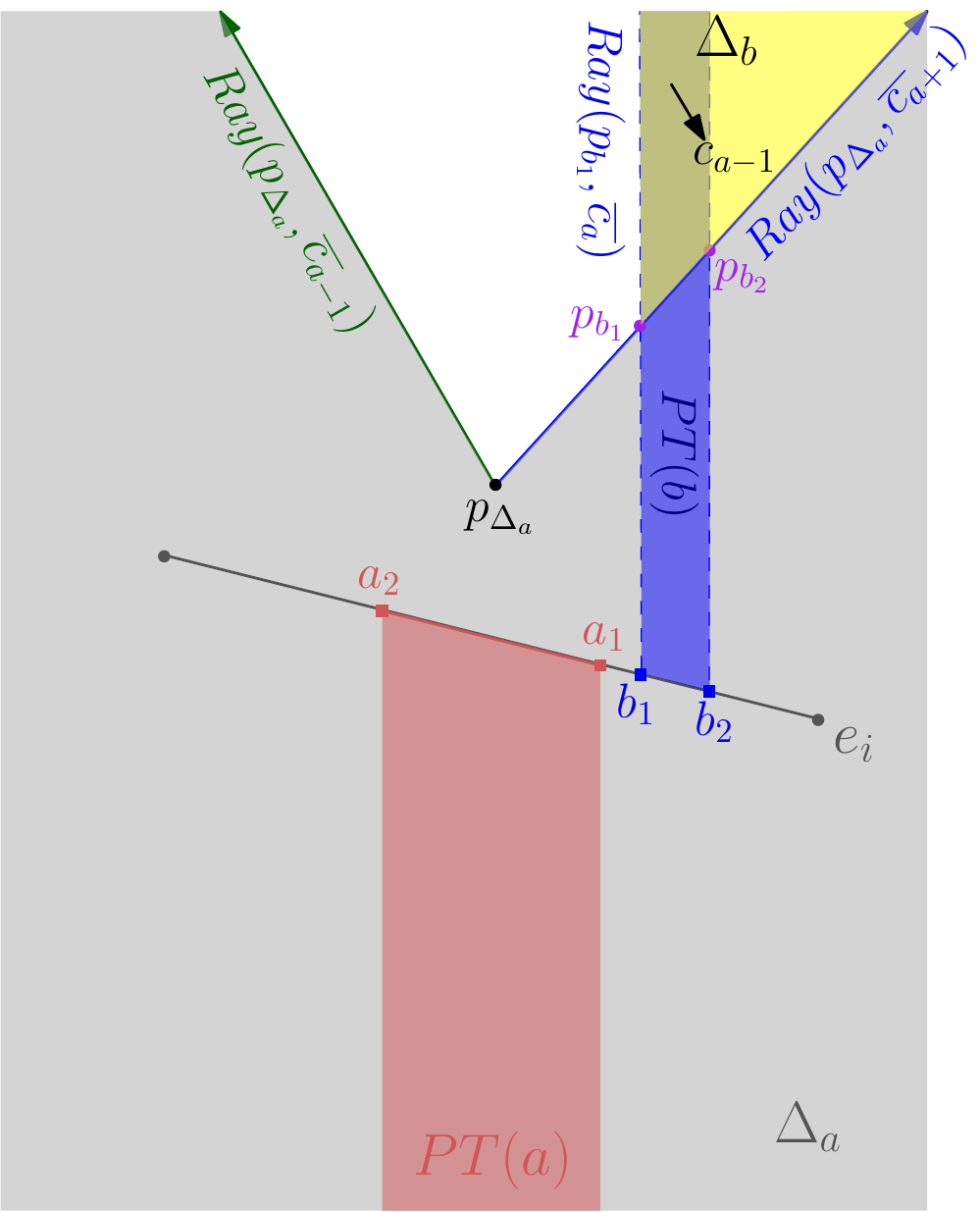}
  \caption{$c_b=\overline{c_a}$ and $c_j=c_{a-1}$.}
  \label{fig:delta_b_case_c3}
\end{subfigure}%
 \hspace{2em}%
      \begin{subfigure}{.5\textwidth}
      \centering
  \includegraphics[scale=0.55]{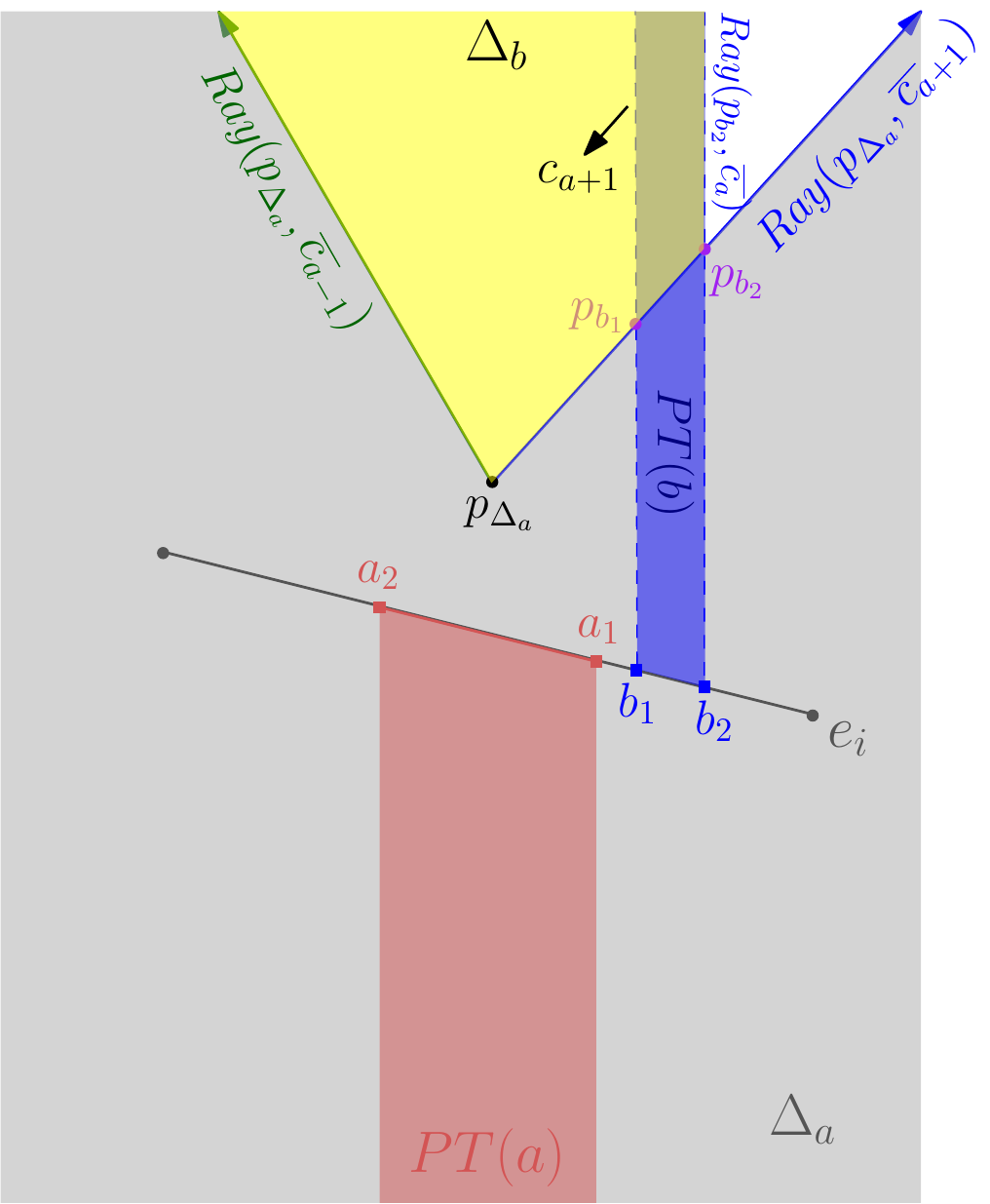}
  \caption{$c_b=\overline{c_a}$ and $c_j=c_{a+1}$.}
  \label{fig:delta_b_case_c4}
\end{subfigure}%
\caption{$R_{a \cup b}$ in Case C.}
\label{fig:delta_b_case_c}
\end{figure}

\begin{figure}[t]
\centering
  \includegraphics[scale=0.55]{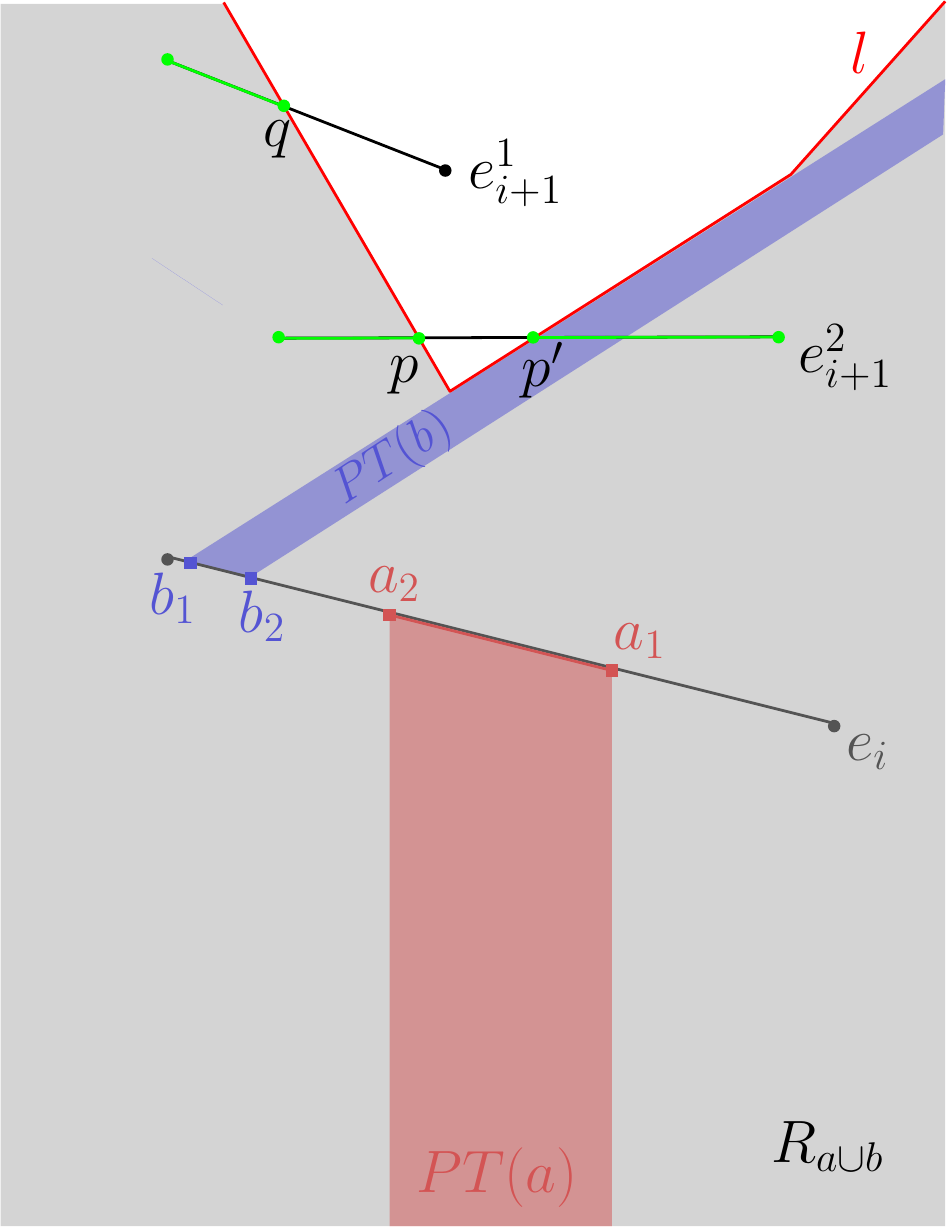}
  \caption{$e_{i+1}$ intersects $R_{a\cup b}$'s boundary either at a single point $q$ or at two points $p$ and $p'$.}
 \label{fig:e_intersect_R_aub}
\end{figure}

We conclude that in each of the cases A--C, $R_{a\cup b}$ is bounded by an infinite convex chain $l$ consisting of at most three edges (see the red chain in Figure~\ref{fig:e_intersect_R_aub}).
But this implies that, unless $R_{a\cup b}\cap e_{i+1}$ is $e_{i+1}$ or is empty, $R_{a\cup b}\cap e_{i+1}$ consists of at most two intervals, each of which contains an endpoint of $e_{i+1}$ (see edges $e^1_{i+1}$ and $e^2_{i+1}$ in Figure~\ref{fig:e_intersect_R_aub}). 

We have shown that by passing through the intervals $a$ and $b$, at most two intervals (with associated length $l(e_i)+2$ and orientation $c_j$) are formed on $e_{i+1}$. Moreover, each of these intervals contains an endpoint of $e_{i+1}$. Therefore, the total number of such intervals that are formed on $e_{i+1}$, by passing through the intervals in $I(e_i) \cup I^+(e_i) \setminus \{ I^+(e_i, \overline{c_j})\cup I^+(e_i,c_j) \}$ is at most two (for each endpoint $p$ of $e_{i+1}$, we retain only the longest interval with $p$ as one of its endpoints).

Finally, observe that by passing through an interval in $I^+(e_i,\overline{c_j})$ and turning backwards in orientation $c_j$, at most one interval is formed on $e_{i+1}$, which does not necessarily contain an endpoint of $e_{i+1}$.

We conclude that at most $|I^+(e_i,\overline{c_j})| + 2$ intervals (with associated length $l(e_i)+2$ and orientation $c_j$) are formed on $e_{i+1}$ during the execution of the algorithm. 
\end{proof}

\end{document}